\let\oldnl\nl% Store \nl in \oldnl
\newcommand{\nonl}{\renewcommand{\nl}{\let\nl\oldnl}}% Remove line number for one line
\newcommand{\KwTitle}[1]{
    \nonl \centerline{\textbf{#1}} 
}
\newcommand{\eqDef}{\stackrel{\Delta}=}
\newtheorem{definition}{Definition}
\newtheorem{lemma}{Lemma}
\newcommand{\negspace}{\vspace{-0.5\baselineskip}}
\newcommand{\xij}[2]{\ensuremath{ x_{#1,#2} }}
\newcommand{\xijbar}[2]{\ensuremath{ \bar{x}_{#1,#2} }}
\def\btc{%
  \leavevmode
  \vtop{\offinterlineskip %\bfseries
    \setbox0=\hbox{B}%
    \setbox2=\hbox to\wd0{\hfil\hskip-.03em
    \vrule height .3ex width .15ex\hskip .08em
    \vrule height .3ex width .15ex\hfil}
    \vbox{\copy2\box0}\box2}}
\title{ 
The Miner's Dilemma 
} 
\author{Ittay Eyal \\ Cornell University} 
\begin{document} 

\maketitle

% \pagestyle{fancy}
% \fancyhf{}
% \renewcommand{\headrulewidth}{0pt}
% \lfoot{\ittayComment{Ittay Eyal (ittay.eyal@cornell.edu)} \hfill \thepage} 

\begin{abstract} 
An open distributed system can be secured by requiring participants to present proof of work and rewarding them for participation. 
The Bitcoin digital currency introduced this mechanism, which is adopted by almost all contemporary digital currencies and related services. 

A natural process leads participants of such systems to form pools, where members aggregate their power and share the rewards. Experience with Bitcoin shows that the largest pools are often open, allowing anyone to join. 
It has long been known that a member can sabotage an open pool by seemingly joining it but never sharing its proofs of work. 
The pool shares its revenue with the attacker, and so each of its participants earns less. 

We define and analyze a game where pools use some of their participants to infiltrate other pools and perform such an attack. 
With any number of pools, no-pool-attacks is not a Nash equilibrium. 
With two pools, or any number of identical pools, there exists an equilibrium that constitutes a tragedy of the commons where the pools attack one another and all earn less than they would have if none had attacked. 

For two pools, the decision whether or not to attack is the miner's dilemma, an instance of the iterative prisoner's dilemma. The game is played daily by the active Bitcoin pools, which apparently choose not to attack. If this balance breaks, the revenue of open pools might diminish, making them unattractive to participants. 
\end{abstract} 

%%%%%%%%%%%%%%%%%%%%%%%%%%%%%%%%%%%%%%%%%%%%%%%%%%%%%%%%%%%%%%%%%%%%%%%%%%%%%%%
%%%%%%%%%%%%%%%%%%%%%%%%%%%%%%%%%%%%%%%%%%%%%%%%%%%%%%%%%%%%%%%%%%%%%%%%%%%%%%%
%%%%%%%%%%%%%%%%%%%%%%%%%%%%%%%%%%%%%%%%%%%%%%%%%%%%%%%%%%%%%%%%%%%%%%%%%%%%%%%

    \section{Introduction} 

Bitcoin~\cite{nakamoto2008bitcoin} is a digital currency that is gaining acceptance~\cite{soper2014paypal} and recognition~\cite{chowdhry2014google}, with an estimated market capitalization of over~4.5 billion US dollars, as of November~2014~\cite{blockchain2014marketCap}. 
Bitcoin's security stems from a robust incentive system. Participants are required to provide expensive proofs of work, and they are rewarded according to their efforts. This architecture has proved both stable and scalable, and it is used by most contemporary digital currencies and related services, e.g.~\cite{litecoin2013site, dogecoin2013site, miller2014permacoin, namecoin2013site}. Our results apply to all such incentive systems, but we use Bitcoin terminology and examples since it serves as an active and prominent prototype. 

Bitcoin implements its incentive systems with a data structure called the \emph{blockchain}. 
The blockchain is a serialization of all money transactions in the system. It is a single global ledger maintained by an open distributed system. 
Since anyone can join the open system and participate in maintaining the blockchain, Bitcoin uses a \emph{proof of work} mechanism to deter attacks: participation requires exerting significant compute resources. 
A participant that proves she has exerted enough resources with a proof of work is allowed to take a step in the protocol by generating a block. 
Participants are compensated for their efforts with newly minted Bitcoins.  The process of creating a block is called \emph{mining}, and the participants~--- \emph{miners}. 

In order to win the reward, many miners try to generate blocks. 
The system automatically adjusts the \emph{difficulty} of block generation,
such that one block is added every~10 minutes to the blockchain. 
This means that each miner seldom generates a block. 
Although its revenue may be positive in expectation, a miner may have to wait for an extended period to create a block and earn the actual Bitcoins. 
Therefore, miners form \emph{mining pools}, where all members mine concurrently and they share their revenue whenever one of them creates a block. 

Pools are typically implemented as a \emph{pool manager} and a cohort of miners. The pool manager joins the Bitcoin system as a single miner. 
Instead of generating proof of work, it outsources the work to the miners. 
In order to evaluate the miners' efforts, the pool manager accepts partial proof of work and estimates each miner's \emph{power} according to the rate with which it submits such partial proof of work. 
When a miner generates a full proof of work, it sends it to the pool manager which publishes this proof of work to the Bitcoin system. 
The pool manager thus receives the full revenue of the block and distributes it fairly according to its members power. 
Many of the pools are open~--- they allow any miner to join them using a public Internet interface. 

Such open pools are susceptible to the classical \emph{block withholding attack}~\cite{rosenfeld2011analysis}, where a miner sends only partial proof of work to the pool manager and discards full proof of work. 
Due to the partial proof of work it sends to the pool, the miner is considered a regular pool member and the pool can estimate its power. 
Therefore, the attacker shares the revenue obtained by the other pool members, but does not contribute. It reduces the revenue of the other members, but also its own. 
We provide necessary background on the Bitcoin protocol, pools and the classical block withholding attack in Section~\ref{sec:prelim}, and specify our model in Section~\ref{sec:model}. 

In this work we analyze block withholding attacks among pools. 
A pool that employs the \emph{pool block withholding attack} registers with the victim pool as a regular miner. 
It receives tasks from the victim pool and transfers them to some of its own miners. 
We call these \emph{infiltrating} miners, and the mining power spent by a pool the \emph{infiltration rate}. 
When the attacking pool's infiltrating miners deliver partial proofs of work, the attacker transfers them to the victim pool, letting the attacked pool estimate their power. 
When the infiltrating miners deliver a full proof of work, the attacking pool discards it. 

This attack affects the revenues of the pools in several ways. 
The victim pool's effective mining rate is unchanged, but its total revenue is divided among more miners. 
The attacker's mining power is reduced, since some of its miners are used for block withholding, but it earns additional revenue through its infiltration of the other pool. 
And finally, the total effective mining power in the system is reduced, causing the Bitcoin protocol to reduce the difficulty. 

Taking all these factors into account, we observe that a pool might be able to increase its revenue by attacking other pools. Each pool therefore makes a choice of whether to attack each of the other pools in the system, and with what infiltration rate. This gives rise to the \emph{pool game}. We specify this game and provide initial analysis in Section~\ref{sec:poolGame}. 

% The total effective mining power of the victim pool is unchanged, since the attack does not affect its mining power. However, the victim pool distributes this revenue among its loyal miners, as well as the infiltrating miners. Therefore the revenue of each of the victim's loyal miners is reduced. The revenue of the attacking pool does change, since some of its miners are used for infiltration rather than mining. Its total revenue is the revenue from direct mining by its loyal miners, plus the revenue received from the victim pool due to the infiltrating miners' proofs of work. 

% We begin our analysis with a simple scenario, where there are two pools only one of which can attack the other, and learn that the attacker can increase its revenue by attacking. Each pool therefore makes a choice of whether to attack each of the other pools in the system, and with what infiltration rate. This gives rise to the \emph{pool game}. We specify this game in Section~\ref{sec:poolGame}. 

In Section~\ref{sec:twoPoolsOneAttacker} we analyze the scenario of exactly two pools where only one can attack the other. 
Here, the attacker can always increase its revenue by attacking. 
We conclude that in the general case, with any number of pools, no-pool-attacks is not a Nash equilibrium. 

Next, Section~\ref{sec:twoPools} deals with the case of two pools, where each can attack the other. Here analysis becomes more complicated in two ways. 
First, the revenue of each pool affects the revenue of the other through the infiltrating miners. We prove that for a static choice of infiltration rates the pool revenues converge. 
Second, once one pool changes its infiltration rate of the other, the latter may prefer to change its infiltration rate of the former. 
Therefore the game itself takes steps to converge. 
We show analytically that the game has a single Nash Equilibrium and numerically study the equilibrium points for different pool sizes. 
For pools smaller than $50\%$, at the equilibrium point both pools earn less than they would have in the non-equilibrium no-one-attacks strategy. 

Since pools can decide to start or stop attacking at any point, this can be modeled as the \emph{miner's dilemma}~--- an instance of the iterative prisoner's dilemma. Attacking is the dominant strategy in each iteration, but if the pools can agree not to attack, both benefit in the long run. 

Finally we address the case of an arbitrary number of identical pools in Section~\ref{sec:pPools}. 
There exists a symmetric equilibrium point in which each pool attacks each other pool. 
As in the minority two-pools scenario, here too at equilibrium all pools earn less than with the no-pool-attacks strategy. 

Our results imply that block withholding by pools leads to an unfavorable equilibrium. Nevertheless, due to the anonymity of miners, a single pool might be tempted to attack, leading the other pools to attack as well. The implications might be devastating for open pools: If their revenues are reduced, miners will prefer to form closed pools that cannot be attacked in this manner. 
Though this may be conceived as bad news for public mining pools, on the whole it may be good news to the Bitcoin system, which prefers small pools. 
We discuss this and other issues pertaining to practice in Section~\ref{sec:discussion}. 

In summary, our contributions are the following: 
\begin{enumerate} 

\item Definition of the pool game where pools in a proof-of-work secured system attack one another with a pool block withholding attack. 

\item In the general case, no-pool-attacks is not an equilibrium. 

\item With two minority pools, the only Nash Equilibrium is when the pools attack one another, and both earn less than if none had attacked. 

Miners therefore face the miner's dilemma, an instance of the iterative prisoner's dilemma, repeatedly choosing between attack and no-attack. 

\item With multiple pools of equal size there is a symmetric Nash equilibrium, where all pools earn less than if none had attacked. 

\item For Bitcoin, inefficient equilibria for open pools may serve the system by reducing their attraction and pushing miners towards smaller closed pools. 

\end{enumerate} 

% Prior work has demonstrated that the analysis of Bitcoin's security is tightly tied to its incentive system~\cite{babaioff2012baloons, eyal2013majority}. 
The classical block withholding attack is old as pools themselves, but its use by pools has not been suggested until recently. We overview related attacks and prior work in Section~\ref{sec:related}, and end with concluding remarks in Section~\ref{sec:conclusion}. 

%%%%%%%%%%%%%%%%%%%%%%%%%%%%%%%%%%%%%%%%%%%%%%%%%%%%%%%%%%%%%%%%%%%%%%%%%%%%%%% 
%%%%%%%%%%%%%%%%%%%%%%%%%%%%%%%%%%%%%%%%%%%%%%%%%%%%%%%%%%%%%%%%%%%%%%%%%%%%%%% 
%%%%%%%%%%%%%%%%%%%%%%%%%%%%%%%%%%%%%%%%%%%%%%%%%%%%%%%%%%%%%%%%%%%%%%%%%%%%%%% 

    \section{Preliminaries --- Bitcoin and Pooled Mining} \label{sec:prelim}

Bitcoin is a distributed, decentralized digital currency~\cite{bitcoin2013protocol,bitcoin2013rules,nakamoto2008bitcoin,bitcoin2013source}. 
Clients use the system by issuing transactions, and the system's only task is to serialize transactions in a single ledger and reject transactions that cannot be serialized due to conflicts with previous transactions. 
Bitcoin transactions are protected with cryptographic techniques that ensure that only the rightful owner of a Bitcoin can transfer it. 

The transaction ledger is stored in a data structure caller the \emph{blockchain}. 
The blockchain is maintained by a network of \emph{miners}, which are compensated for their effort in Bitcoins. The miners are in charge of recording the transactions in the blockchain. 

        \subsection{Revenue for Proof Of Work} 

The blockchain records the transactions in units of blocks. 
Each block includes a unique ID, and the ID of the preceding block. 
The first block, dubbed \emph{the genesis block}, is defined as part of the protocol. 
A valid block contains the hash of the previous block, the hash of the transactions in the current block, and a Bitcoin address which is to be credited with a reward for generating the block. 

Any miner may add a valid block to the chain by (probabilistically) proving that it has spent a certain amount of work and publishing the block with the proof over an overlay network to all other miners. 
When a miner creates a block, it is compensated for its efforts with Bitcoins. 
This compensation includes a per-transaction fee paid by the users whose transactions are included, and an amount of minted Bitcoins that are thus introduced into the system. 
The rate at which the new Bitcoins are generated with each block is designed to slowly decrease towards zero, and will reach zero when~21 million Bitcoins are created. Then, the miners' revenue will be only from transaction fees. 

The work which a miner is required to do is to repeatedly calculate a a hash function~--- specifically the SHA-256 of the SHA-256 of a block header. To indicate that he has performed this work, the miner provides a probabilistic proof as follows. The generated block has a nonce field, which can contain any value. The miner places different values in this field and calculates the hash for each value. If the result of the hash is smaller than a target value, the nonce is considered a solution, and the block is valid. 

The number of attempts to find a single hash is therefore random with a geometric distribution, as each attempt is a Bernoulli trial with a success probability determined by the target value. At the existing huge hashing rates and target values, the time to find a single hash can be approximated by an exponential distribution. The average time for a miner to find a solution is therefore proportional to its hashing rate or \emph{mining power}. 

To maintain a constant rate of Bitcoin generation, and as part of its defense against denial of service and other attacks, the system normalizes the rate of block generation. To achieve this, the protocol deterministically defines the target value for each block according to the time required to generate recent blocks. The target, or \emph{difficulty}, is updated once every~2016 blocks such that the average time for each block to be found is~10 minutes. 

Note that the exponential distribution is memoryless. If all miners mine for block number $b$, once the block is found at time $t$, all miners switch to mine for the subsequent block $b+1$ at $t$ without changing their probability distribution of finding a block after~$t$. Therefore, the probability that a miner $i$ with mining power $m_i$ finds the next block is its ratio out of the \emph{total mining power} $m$ in the system. 

        \subsection*{Forks} 
            
Block propagation in the overlay network takes seconds, whereas the average mining interval is~10 minutes. It is therefore possible for two miners to generate competing blocks, both of which list the same block as their predecessor. The system has mechanism to solve such situations, causing one of the blocks to be discarded. However, such bifurcations are rare and occur on average once every~60 blocks~\cite{decker2013propagation}, and we ignore them for the sake of simplicity. Since the choice of the discarded block on bifurcation is random, one may incorporate this event into the probability of finding a block, and consider instead the probability of finding a block that is not discarded. 

        \subsection{Pools} 

As the value of Bitcoin rose, Bitcoin mining has become a rapidly advancing industry. 
Technological advancements lead to ever more efficient hashing ASICs~\cite{taylor2013bespoke}, and mining datacenters are built around the world~\cite{popper2013mines}. 
Mining is only profitable using dedicated cutting edge mining rigs, otherwise the energy costs exceed the expected revenue. 

Although expected revenue from mining is proportional to the power of the mining rigs used, a single home miner using a small rig is unlikely to mine a block for years~\cite{swanson2013calculator}. 
Consequently, miners often organize themselves into mining \emph{pools}. 
Logically, a pool is a group of miners that share their revenues when one of them successfully mines a block. For each block found, the revenue is distributed among the pool members in proportion to their mining power\footnote{This is a simplification that is sufficient for our analysis. The intricacies of reward systems are explained in~\cite{rosenfeld2011analysis}.}. 
The expected revenue of a pool member is therefore the same as its revenue had it mined \emph{solo}. 
However, due to the large power of the pool, it finds blocks at a much higher rate, and so the frequency of revenue collection is higher, allowing for a stable daily or weekly income. 

In practice, most pools are controlled by a pool manager.\footnote{A notable exception is P2Pool~\cite{forrsetv2011p2pool}, which we discuss in Section~\ref{sec:discussion}.} 
Miners register with the pool manager and mine on its behalf: The pool manager generates tasks and the miners search for solutions based on these tasks that can serve as proof of work. Once they find a solution, they send it to the pool manager. 
The pool manager behaves as a single miner in the Bitcoin system. Once it obtains a legitimate block from one of its miners, it publishes it. 
The block transfers the revenue to the control of the pool manager. 
The pool manager then distributes the revenue among the miners according to their mining power. 
The architecture is illustrated in Figure~\ref{fig:threeHonestPools}

\begin{figure}[!t]
\centering
\includegraphics[width=\linewidth]{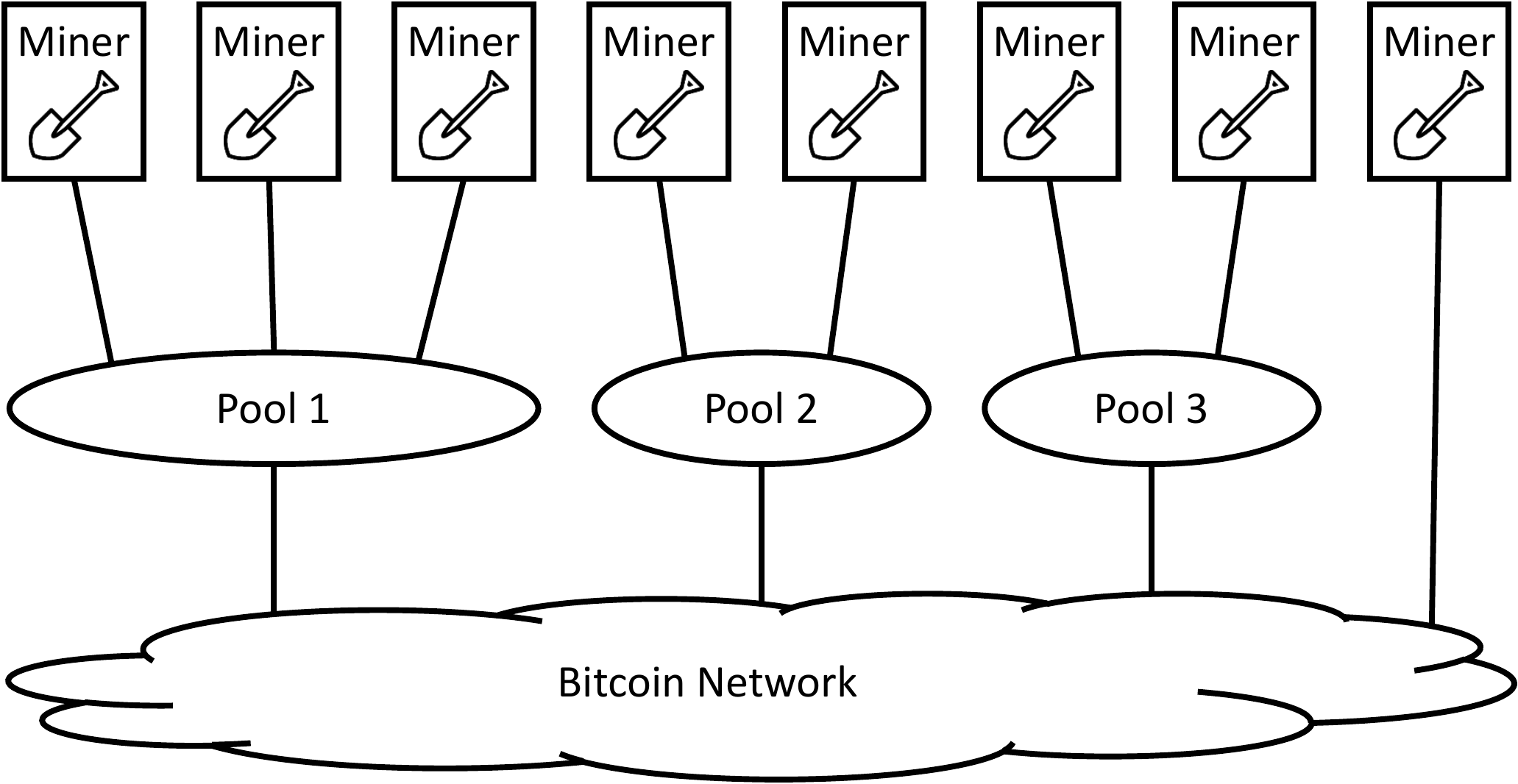}
\caption[.]{\protect
A system with 8 miners and~3 honest pools. Pool~1 has~3 registered miners, pools~2 and~3 have~2 registered miners each, and one miner mines solo. 
} 
\label{fig:threeHonestPools}
\end{figure}

In order to estimate the mining power of a miner, the pool manager sets a partial target for each member, much larger (i.e., easier) than the target of the Bitcoin system. 
Each miner is required to send the pool manager blocks that are correct according to the partial target. The partial target is chosen to be large, such that partial solutions arrive frequently enough for the manager to accurately estimate the power of the miner, but small (hard) to reduce management overhead. 
Pools often charge a small percentage of the revenue as fee. We discuss in Section~\ref{sec:discussion} the implications of such fees to our analysis. 

Many pools are open and accept any interested miner. 
Pool interface typically includes a web interface for registration and a miner interface for the mining software. 
In order to mine for a pool, a miner registers with the web interface, supplies a Bitcoin address to receive its future shares of the revenue, and receives from the pool credentials for mining. 
Then he feeds his credentials and the pool's address to its mining rig, which starts mining. The mining rig obtains its tasks from the pool and sends partial and full proof of work with the STRATUM protocol~\cite{btcWiki2014stratum}. 
As it finds blocks, the pool manager credits the miner's account according to its share of the work, and transfers these funds either on request or automatically to the aforementioned Bitcoin address. 

        \subsection*{Too Big Pools} 

Arguably in realistic scenarios of the Bitcoin system no pool controls a majority of the mining power. The reason is that the manager of a pool of this size can single-handedly take control of the Bitcoin system by generating the longest chain and ignoring blocks generated by other miners. 

If the system reaches this situation it is severely unstable~\cite{andresen2014centralized} (and~\cite{eyal2013majority} warns that the system is unstable with even smaller pools). For one day in June~2014 a single pool called GHash.IO produced over $50\%$ of the blocks in the Bitcoin main chain. The Bitcoin community backlashed at the pool (which did nothing worse than being extremely successful). GHash.IO reduced its relative mining power and publicly committed to stay away from the $50\%$ limit. 

        \subsection{Block Withholding} \label{sec:classicalBWA}

Classical Block Withholding~\cite{rosenfeld2011analysis} is an attack performed by a pool member against the other pool members. The attacking miner registers with the pool and apparently starts mining honestly~--- it regularly sends the pool partial proof of work. 
However, the attacking miner sends only partial proof of work. If it finds a full solution that constitutes a full proof of work it discards the solution, reducing the pool's total revenue. This attack is illustrated in Figure~\ref{fig:regularBWA}. 

\begin{figure}[!t]
\centering
\includegraphics[width=\linewidth]{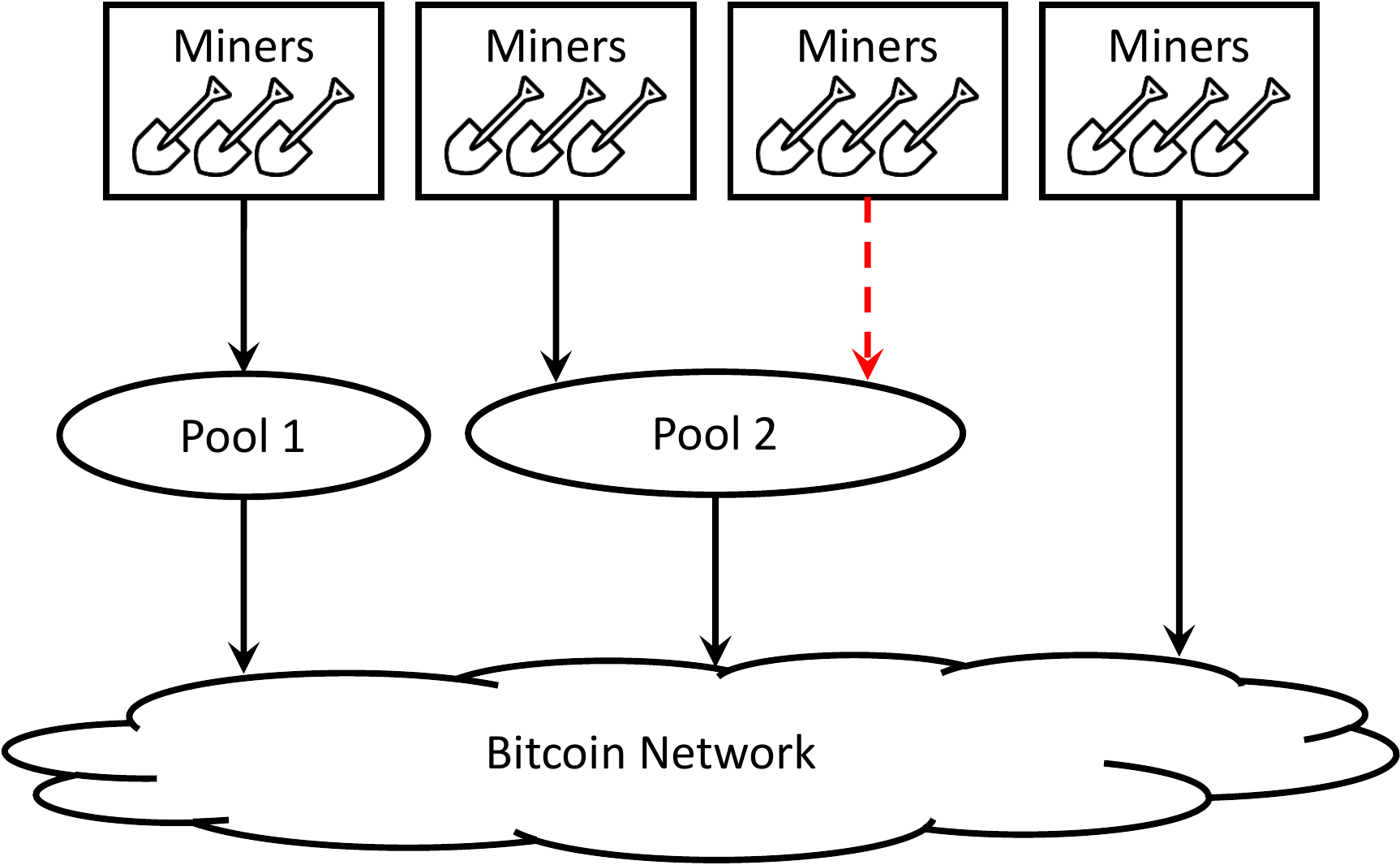}
\caption[.]{\protect
Classical Block Withholding attack. A group of miners attack Pool~2 with a block withholding attack, denoted by a dashed red arrow. 
} 
\label{fig:regularBWA} 
\end{figure} 

The attacker does not change the pool's effective mining power, and does not affect directly the revenue of other pools. However, the attacked pool shares its revenue with the attacker. Therefore each miner earns less, as the same revenue is distributed among more miners. 

Recall that the proof of work is only valid to a specific block, as it is the nonce with which the block's hash is smaller than the target. The attacking miner cannot use it. 
Moreover, this attack reduces the attacker's revenue compared to solo mining or honest pool participation: 
It suffers from the reduced revenue like the other pool participants, and its revenue is less than its share of the total mining power in the system. 
This attack can therefore only be used for sabotage, at a cost to the attacker. 

            \subsubsection*{Detection} 

Although a pool can detect that it is under a block withholding attack with good accuracy, it might not be able to detect which of its registered miners are the perpetrators. The reason is that, by design, the partial proof of work difficulty is much easier than the full proof of work difficulty. 
A pool can estimate its expected mining power and its actual mining power by the rates of partial proofs of work and full proofs of work, respectively, supplied by its miners. A difference above a set confidence interval indicates an attack. 

To detect whether a single miner is attacking it, the pool must use a similar technique, comparing the estimated mining power of the attacker based on its partial proof of work with the fact it never supplies a full proof of work. If the attacker has a small mining power, it will send frequent partial proofs of work, but the pool will only expect to see a full proof of work at very low frequency. Therefore, it cannot obtain statistically significant results that would indicate an attack. 

An attacker can therefore use multiple small miners, rather than a single large one, and replace them frequently. For example, miners whose expected full proof of work frequency is yearly will see a non-negligible daily revenue ($\btc 25 / 12 / 31 \approx \btc 0.07$). Replacing them monthly will not allow a pool to confidently tag them as attackers without tagging legitimate miners as well. 

% We normalize mining power such that the total mining power in the system is one and we normalize revenue such that the total revenue in the system is one. Consider a pool attacked by a miner with a block withholding attack. Denote the total mining power registered to the pool by~$\alpha$, of which~$\beta$ belongs to the attacker. 
% 
% The revenue that the pool expects to see is~$\alpha$, according to its registered mining power, but the actual mining power working in its behalf is only $\alpha - \beta$ as the attacker withholds blocks that are full proof of work. 
% \ittayComment{Inaccurate: need to remove the full blocks from the expected revenue of the pool, as it only receives partial proofs from the miners; this is negligible, though. } 

%%%%%%%%%%%%%%%%%%%%%%%%%%%%%%%%%%%%%%%%%%%%%%%%%%%%%%%%%%%%%%%%%%%%%%%%%%%%%%% 
%%%%%%%%%%%%%%%%%%%%%%%%%%%%%%%%%%%%%%%%%%%%%%%%%%%%%%%%%%%%%%%%%%%%%%%%%%%%%%% 
%%%%%%%%%%%%%%%%%%%%%%%%%%%%%%%%%%%%%%%%%%%%%%%%%%%%%%%%%%%%%%%%%%%%%%%%%%%%%%% 

    \section{Model and Standard Operation} \label{sec:model} 

Bitcoin is the first widely used system that rewards participants for proof of work at a dynamically normalized rate. 
Its success demonstrates the strength of this architecture, and others were quick to follow. Most are digital currencies with a various proof of work algorithms; some have other uses, for example NameCoin~\cite{namecoin2013site}, which is a DNS replacement with no central authority. 
This model, and therefore our results, apply to all such systems. 
Nevertheless, since Bitcoin is both a prototype and a working example, we use the Bitcoin terminology. 

We specify the basic model in which participants operate in Section~\ref{sec:baseModel}, proceed to describe how honest miners operate in this environment in Sections~\ref{sec:solo} and~\ref{sec:pools}, and how the classical block withholding attack is implemented with our model in Section~\ref{sec:bwMiner}. 

\newcommand{\cmdNewTask}[1]{ \ensuremath{ \texttt{newTask}(#1) }}
\newcommand{\cmdPublish}[2]{ \ensuremath{ \texttt{publish}(#1, #2) }}
\newcommand{\cmdWork}[1]{ \ensuremath{ \texttt{work}(#1) }} 
\newcommand{\send}[2]{ \ensuremath{ \texttt{send}(#1, #2) }} 
\newcommand{\recv}[1]{ \ensuremath{ \texttt{recv}(#1) }} 
\newcommand{\cmdTaskFromPool}[2]{ \ensuremath{ \texttt{taskFromPool}(#1, #2) }} 
\newcommand{\cmdPoWToPool}[3]{ \ensuremath{ \texttt{PoWToPool}(#1, #2, #3) }}
\newcommand{\cmdPoWFromMiner}[1]{ \ensuremath{ \texttt{getPoW}(#1) }}
\newcommand{\cmdSendTask}[2]{ \ensuremath{ \texttt{sendTask}(#1, #2) }}
\newcommand{\cmdPay}[2]{ \ensuremath{ \texttt{pay}(#1, #2) }}
\newcommand{\cmdCollect}[2]{ \ensuremath{ \texttt{collect}(#1, #2) }}

\newcommand{\fPoW}{ \ensuremath{ \textit{fPoW} }} 
\newcommand{\pPoW}{ \ensuremath{ \textit{pPoW} }} 
\newcommand{\tasks}{ \ensuremath{ \textit{tasks} }} 
\newcommand{\pPoWs}{ \ensuremath{ \textit{pPoWs} }} 

        \subsection{Model} \label{sec:baseModel}

The system is comprised of the Bitcoin network and nodes, and progresses in steps. 
Each node has a unique ID \textit{id} and can generate \emph{tasks} by calling the $\cmdNewTask{\textit{id}}$ command. The task is associated with the given $\textit{id}$. 

A node can work on a task for the duration of a step using the $\cmdWork{\textit{task}}$ command. 
A node that works on tasks with \cmdWork{}\ is called a miner. 
The command returns a set of partial proof of work and a set of full proofs of work. 
The number of proofs in each set has a Poisson distribution, partial proofs with a large mean and full proofs with a small mean. 
All nodes have identical power, and hence identical probabilities to generate proofs of work. 

The Bitcoin network pays for full proofs of work. To acquire this payoff an entity publishes a task $\textit{task}$ and its corresponding proof of work $\textit{PoW}$ to the network by calling the $\cmdPublish{\textit{task}}{\textit{PoW}}$ command, which returns the amount earned. The payoff goes to the ID associated with \textit{task}. 
The Bitcoin protocol normalizes revenue such that the average total revenue distributed in each step is a constant throughout the execution of the system. 
A node can transact $b$ Bitcoins to another node with ID $w$ with the $\cmdPay{w}{b}$ command. 

Apart from the \cmdWork{}\ command, all local operations, payments, message sending, propagation, and receipt are instantaneous. 

We assume that the number of miners is large enough such that mining power can be split arbitrarily without resolution constraints. Denote the number of pools with $p$, the total number of mining power in the system with $m$ and the miners loyal to pool~$i$ ($1 \le i \le p$) with $m_i$. 
We use a quasi-static analysis where miner loyalty to a pool does not change over time. 

        \subsection{Solo Mining} \label{sec:solo}

Nodes are defined by an implementation of a doStep function~--- their routine in a single step. As a first example we provide the behavior of a solo miner. 
A solo miner is a node that generates its own tasks and publishes them to earn the payoff. The algorithm of a solo miner is given in Algorithm~\ref{alg:solo}. 

\begin{algorithm}[t] 
\caption{Solo Miner $w$} 
\label{alg:solo} 
\SetAlgoNoLine
\SetAlgoNoEnd
\DontPrintSemicolon
\SetNoFillComment

\KwFunction({$\text{doStep}$}){ 
    $\textit{task} \gets \cmdNewTask{w}$ \; 
    $(\pPoW, \fPoW) \gets \cmdWork{\textit{task}}$ \; 
    $\cmdPublish{\textit{task}}{\fPoW}$ \; 
} 

\end{algorithm}

        \subsection{Pools} \label{sec:pools}

Pools are nodes that serve as coordinators and multiple miners can register to a pool and work for it. 
The pseudocode for a miner~$w$ working for a pool~$i$ is shown in Algorithm~\ref{alg:honest}. 

The pool generates the tasks and sends a task to each miner. 
The miner receives its task and works on it for the duration of the step. 
Since full proofs of work are rarely found, the pool needs to reliably measure the miner's work in a different manner. 
To facilitate pool operation, miners send the pool not only full proof of work, but also partial proofs of work. 

The pool receives the proofs of work of all its miners, registers the partial proofs of work and publishes the full proofs. 
It calculates its overall revenue, and proceeds to distribute it among its miners. 
Each miner receives revenue proportional to its success in the current step, namely the ratio of its partial proofs of work out of all partial proofs of work the pool received. The pool pays miner $w$ its share $b$ of the revenue with $\cmdPay{w}{b}$, and notifies the miner of the payment with a separate message. 
We assume that pools do not collect fees of the revenue. Pool fees and their implications on our analysis are discussed in Section~\ref{sec:discussion}. 

% A miner working solo publishes the proofs of work it finds to earn its revenue. 
% A miner working for a pool receives its revenue from the pool. The pool publishes full proofs of work it obtained from miners, and then pays each miner according to its efforts. 

\begin{algorithm}[t] 
\caption{Honest miner $w$ at Pool $i$.} 
\label{alg:honest} 
\SetAlgoNoLine
\SetAlgoNoEnd
\DontPrintSemicolon
\SetNoFillComment

\KwTitle{Miner $w$} \;
\negspace 

\KwState{ 
    $\textit{revenue} \in \mathbb{R}$, initially 0 
}
\BlankLine 

\KwFunction({$\text{doStep}$}){ 
    $\textit{task} \gets \recv{i}$ \; 
    $(\pPoW, \fPoW) \gets \cmdWork{\textit{task}}$ \; 
    \send{i}{(\pPoW, \fPoW)} \; 
    $\textit{revenue} \gets \textit{revenue} + \recv{i}$ \; 
} 
\BlankLine \BlankLine 

\KwTitle{Pool $i$} \;
\negspace 

\KwState{
    $W \subset \textup{ID space}$ \; 
    $\tasks(\cdot): W \rightarrow \textup{Task space}$ \; 
}
\BlankLine

\KwFunction({$\text{doStep}$}){ 
    $\textit{stepRevenue} \gets 0$ \; 
    \ForEach{Registered Miner $w$} { 
        $\tasks(w) \gets \cmdNewTask{i}$ \; 
        $\send{w}{\tasks(w)}$ \; 
    }
\BlankLine
    \KwIttayComment{Wait until end of step.} 
\BlankLine
    \ForEach{$w \in W$} {
        $(\pPoW, \fPoW) \gets \recv{w}$ \; 
        $\textit{stepRevenue} \gets \textit{stepRevenue} + \cmdPublish{\textit{tasks}(w)}{\textit{fPoW}}$ \; 
        $\pPoWs(w) \gets pPoW$ \; 
    } 
    $\textit{stepPPow} \gets \sum_{\text{registered } w} \pPoWs(w)$ \; 
    \ForEach{$w \in W$} { 
        $\cmdPay{w}{\textit{stepRevenue} \times \pPoWs(w) / \textit{\textit{stepPPow}}}$ \; 
        $\send{w}{\textit{stepRevenue} \times \pPoWs(w) / \textit{\textit{stepPPow}}}$
    } 
} 

\end{algorithm}

        \subsection{Block Withholding Miner} \label{sec:bwMiner} 

A miner registered at a pool can perform the classical block withholding attack, where it operates as if it worked for the pool, only it never sends its proof of work. The pseudocode is in Algorithm~\ref{alg:bwMiner}, for a miner interacting with a standard pool as in Algorithm~\ref{alg:honest}. 
The pool registers the miner's partial proofs, but cannot distinguish between miners running the honest miner of Algorithm~\ref{alg:honest} and block withholding miners running Algorithm~\ref{alg:bwMiner}. 

The implications are that a miner that engages in block withholding does not contribute to the pool's overall mining power, but still shares the pool's revenue according to its sent partial proofs of work. 

\begin{algorithm}[t] 
\caption{Block Withholding Miner $w$ at pool $i$.} 
\label{alg:bwMiner} 
\SetAlgoNoLine
\SetAlgoNoEnd
\DontPrintSemicolon
\SetNoFillComment

\KwFunction({$\text{doStep}$}){ 
    $\textit{task} \gets \cmdTaskFromPool{i}{w}$ \; 
    $(\pPoW, \fPoW) \gets \cmdWork{\textit{task}}$ \; 
    \cmdPoWToPool{i}{\pPoW}{\emptyset} \; 
    $\cmdCollect{i}{w}$ \; 
} 
\end{algorithm} 

To reason about a pool's efficiency we define its per-miner revenue as follows. 

\begin{definition}[Revenue density] 
The \emph{revenue density} of a pool is the ratio between the average revenue a pool member earns and the average revenue it would have earned as a solo miner. 
\end{definition} 

The revenue density of a solo miner, and that of a miner working with an unattacked pool are~1. If a pool is attacked with block withholding, its revenue density decreases. 

        \subsection{Continuous Analysis} \label{sec:continuous}

Because our analysis will be of the average revenue, we will consider proofs of work, both full and partial, as continuous deterministic sizes, according to their probability. 
The $\cmdWork{}$ command will therefore return a deterministic fraction of proof of work. 

%%%%%%%%%%%%%%%%%%%%%%%%%%%%%%%%%%%%%%%%%%%%%%%%%%%%%%%%%%%%%%%%%%%%%%%%%%%%%%% 
%%%%%%%%%%%%%%%%%%%%%%%%%%%%%%%%%%%%%%%%%%%%%%%%%%%%%%%%%%%%%%%%%%%%%%%%%%%%%%% 
%%%%%%%%%%%%%%%%%%%%%%%%%%%%%%%%%%%%%%%%%%%%%%%%%%%%%%%%%%%%%%%%%%%%%%%%%%%%%%% 

    \section{The Pool Game} \label{sec:poolGame}

        \subsection{The Pool Block Withholding Attack} 
 
Just as a miner can perform block withholding on a pool $j$, a pool~$i$ can use some of its miners to infiltrate a pool $j$ and perform a block withholding attack on~$j$. Denote the number of such infiltrating miners at step~$t$ by~$\xij{i}{j}(t)$. In this case, the infiltrating miners obtain their share of pool~$j$'s revenue, and transfer it back to pool~$i$. Infiltrators from~$i$ to~$j$, as well as any miners that honestly mine for pool~$i$, are \emph{loyal} to pool~$i$. Pool~$i$ distributes its revenue from mining and from its infiltrators evenly among all its registered miners, according to their partial proofs of work. 

The pseudocode of a block withholding pool is shown in Algorithm~\ref{alg:bwa}. The pool's miners are oblivious to the change and their algorithm is identical to the one in Algorithm~\ref{alg:honest}. 

\newcommand{\prevInfRev}{\ensuremath{ \textit{prevRoundInfiltrationRevenue} }}
\newcommand{\infilt}{\ensuremath{ \textit{inf} }}

\begin{algorithm*}[t] 
\caption{Block Withholding Pool $i$.} 
\label{alg:bwa} 
\SetAlgoNoLine
\SetAlgoNoEnd
\DontPrintSemicolon
\SetNoFillComment

\KwState{
    $\prevInfRev \in \mathbb{R}$, initially 0 \; 
    $W \subset \textup{ID space}$ \; 
    $\infilt(\cdot): W \rightarrow \textup{ID space} \cup {\bot}$ \; 
    $\tasks(\cdot): W \rightarrow \textup{Task space}$ \; 
}
\BlankLine

\KwFunction({$\text{doStep}$}){ 
    \ForEach{$w \in W$} { 
        \If({\hfill(infiltrator)}){$\infilt(w) \neq \bot$} { 
            $\tasks(w) \gets \cmdTaskFromPool{\infilt(w)}{i}$ \; 
        } \Else { 
            $\tasks(w) \gets \cmdNewTask{i}$ \; 
        }
        $\cmdSendTask{w}{\tasks(w)}$ \; 
    }
\BlankLine 
    \KwIttayComment{Wait until end of round} 
\BlankLine 
    $\textit{stepRevenue} \gets \prevInfRev$ \; 
    $\textit{stepPPow} \gets 0$ \; 
    \ForEach{$w \in W$} { 
        $(\pPoW, \fPoW) \gets \cmdPoWFromMiner{w}$ \; 
        $\textit{stepPPow} \gets \textit{stepPPow} + | \pPoW |$ \; 
        \If({\hfill(infiltrator)}){$\infilt(w) \neq \bot$} { 
            $\cmdPoWToPool{\infilt(w)}{\pPoW}{\emptyset}$ \; 
        } \Else { 
            $\textit{stepRevenue} \gets \textit{stepRevenue} + \cmdPublish{tasks(w)}{\fPoW}$ \; 
        } 
    } 
    \ForEach{$w \in W$} {
        $\cmdPay{w}{\textit{stepRevenue} \cdot \pPoWs(w) / \textit{stepPPow}}$ \; 
    } 
\BlankLine 
    $\prevInfRev \gets 0$ \; 
    \ForEach{$\{ w | w \in W \wedge \infilt(w) \neq \bot \}$} { 
        $\prevInfRev \gets \prevInfRev + \cmdCollect{\infilt(w)}{i}$  \; 
    } 
} 
\end{algorithm*} 

        \subsection{Block Withholding Recycling} 

We assume that the infiltrating miners are loyal to the attacker. 
However, some of the pool's members may be disloyal infiltrators. For example, pool~1 can use a loyal miner to infiltrate pool~2, and pool~2, thinking the miner is loyal to it, might use it to attack pool~1. 

When sending disloyal miners to perform block withholding at other pools, an attacker takes a significant risk. The pool~2's perspective in the scenario described above. The disloyal miner can perform honest mining for pool~1, rather than withhold its blocks, and not return any revenue to pool~2. Moreover, it will take its share of pool~2's revenues (which thinks the miner is loyal to it) and deliver it back to pool~1. 

To avoid such a risk, a pool needs a sufficient number of verified miners~--- miners that it knows to be loyal. In Bitcoin this happens in the common case where the pool owner has miners of his own. 

        \subsection{Revenue Convergence} 

Note that pool~$j$ sends its revenue to infiltrators from pool~$i$ at the end of the step, and this revenue is calculated in pool~$i$ at the beginning of the subsequent step with the \prevInfRev\ variable. If there is a chain of pools of length $\ell$ where each pool infiltrates the next, the pool revenue will not be constant, since the revenue from infiltration takes one step to take each hop. If $\ell_{\max}$ is the longest chain in the system, the revenue stabilizes after $\ell_{\max}$ steps and remains constant. If there are loops in the infiltration graph, the system will converge to a certain revenue. 

\begin{lemma}[Revenue convergence] 
If infiltration rates are constant, the pool revenues converge. 
\end{lemma}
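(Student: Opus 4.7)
The plan is to recast the step-by-step revenue transfer as a linear dynamical system and show that its transition matrix is substochastic in a way that forces convergence.

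First, I would set up notation. Let $n_j \eqDef (m_j - \sum_{k} \xij{j}{k}) + \sum_{k \ne j} \xij{k}{j}$ denote the number of miners registered at pool $j$ (its own loyal miners that stay, plus infiltrators sent from other pools), and let $e_j \eqDef m_j - \sum_k \xij{j}{k}$ denote the mining power actually mining on pool~$j$'s behalf. Because infiltration rates are constant, the Bitcoin difficulty (which depends only on the total effective mining power $\sum_j e_j$) is also constant, so pool~$j$'s direct mining revenue per step is a fixed constant $c_j$. Letting $R_i(t)$ denote the total revenue of pool~$i$ at step~$t$, the algorithm in Algorithm~\ref{alg:bwa} gives the recurrence
\begin{equation*}
R_i(t+1) = c_i + \sum_{j \ne i} \xij{i}{j}\cdot \frac{R_j(t)}{n_j},
\end{equation*}
since the revenue collected one step later from each infiltrator is that infiltrator's per-miner share $R_j(t)/n_j$ of pool~$j$'s revenue.

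Second, I would rewrite this in matrix form as $R(t+1) = c + A\,R(t)$, where $A_{ij} \eqDef \xij{i}{j}/n_j$ for $j\ne i$ and $A_{ii} \eqDef 0$. The convergence question is then exactly the question of whether the spectral radius $\rho(A)$ is strictly less than one. The key observation is that $A$ is column-substochastic: for each column $j$,
\begin{equation*}
\sum_{i} A_{ij} \;=\; \frac{\sum_{i \ne j}\xij{i}{j}}{n_j} \;=\; \frac{n_j - e_j}{n_j} \;<\; 1,
\end{equation*}
where strict inequality holds whenever $e_j>0$, i.e.\ pool~$j$ has at least one miner actually mining on its behalf (pools with $e_j=0$ earn nothing directly and nothing for their infiltrators, so they can be removed from the system without affecting any other pool's dynamics). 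Since $A$ is nonnegative with all column sums strictly below~$1$, a standard Perron--Frobenius / Gershgorin argument (applied to $A^{\top}$, whose row sums are the column sums of $A$) yields $\rho(A)=\rho(A^{\top})<1$.

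Third, I would conclude by invoking the standard result on affine iterations: when $\rho(A)<1$, the map $T(R)=c+AR$ is a contraction in an appropriate norm, $I-A$ is invertible with $(I-A)^{-1}=\sum_{k\ge 0}A^k$, and the iterates converge geometrically to the unique fixed point
\begin{equation*}
R^{*} \;=\; (I-A)^{-1}\,c.
\end{equation*}
The main obstacle is the bookkeeping in setting up $n_j$ and $c_j$ correctly and verifying that the column-sum bound is strict; once that substochastic property is in hand, the convergence is immediate from standard linear-algebra facts. The chain/DAG cases mentioned informally in the paragraph preceding the lemma are then special cases: for a chain of length~$\ell_{\max}$, $A$ is nilpotent with $A^{\ell_{\max}+1}=0$, so the fixed point is reached exactly after $\ell_{\max}$ steps; loops in the infiltration graph make $A$ merely have spectral radius $<1$, giving asymptotic rather than finite-time convergence.
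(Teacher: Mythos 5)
Your proposal is correct and follows essentially the same route as the paper: both recast the dynamics as an affine iteration $\mathbf{r}(t) = \mathbf{m} + \mathbf{G}\,\mathbf{r}(t-1)$ with a nonnegative substochastic transition matrix, bound its spectral radius below one (Perron--Frobenius), and conclude geometric convergence to the fixed point $(I-\mathbf{G})^{-1}\mathbf{m}$. The only differences are bookkeeping~--- you track total pool revenues, making your matrix column-substochastic, while the paper tracks revenue densities, making its matrix row-substochastic (the two are diagonally similar)~--- and your aside that pools with zero effective mining power ``earn nothing for their infiltrators'' is not quite right (they still collect infiltration revenue), though this degenerate corner case is one the paper's own proof also glosses over.
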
 

\begin{proof} 
Denote the revenue density of pool~$i$ at the end of step~$t$ by~$r_i(t)$, and define the revenue density vector 
\[
\textbf{r(t)} \eqDef (r_1(t), \dots, r_p(t))^T \,\,\, . 
\] 
In every round, pool~$i$ uses its mining power of $m_1 - \sum_j \xij{1}{j}$ used for direct mining (and not attacking), and shares it among its $m_1 + \sum_j \xij{j}{1}$ members (all sums are over the range $1, \dots, p$), including malicious infiltrators. 
Denote the direct mining revenue density of each pool (ignoring normalization, which is a constant factor) with the vector 
\[ 
\textbf{m} 
\eqDef 
\left( 
    \frac{m_1 - \sum_j \xij{1}{j}}{m_1 + \sum_j \xij{j}{1}}, 
    \dots, 
    \frac{m_p - \xij{p}{j}}{m_p + \sum_j \xij{j}{p}}
\right)^T \,\,\, . 
\] 

The revenue of Pool~$i$ in step $t$ taken through infiltration from pool~$j$'s revenue in step $t-1$ is $\xij{i}{j} r_j(t-1)$. Pool~i distributes this revenue among its $m_i + \sum_k \xij{k}{i}$ members~--- loyal and infiltrators. 
Define the $p \times p$ \emph{infiltration matrix} whose $i, j$ element is 
\[
\textbf{G} 
\eqDef 
% \left[ \frac{\xij{i}{j}}{m_i + \sum_j \xij{j}{i}} \right]_{ij} \,\,\, . 
\left[ \frac{\xij{i}{j}}{m_i + \sum_k \xij{k}{i}} \right]_{ij} \,\,\, . 
% \left[ \frac{\xij{i}{j}}{m_j + \sum_k \xij{k}{j}} \right]_{ij} \,\,\, . 
\]
And the revenue vector at step $t$ is 
\begin{equation} \label{eqn:rt}
\textbf{r}(t) = \textbf{m} + \textbf{G} \textbf{r}(t-1) \,\,\, . 
\end{equation}  

Since the row sums of the infiltration matrix are smaller than one, its largest eigenvalue is smaller than~1 according to the Perron-Frobenius theorem. Therefore, the revenues at all pools converges as follows for~$t \ge 1$: 
\begin{equation} \label{eqn:rt}
\textbf{r}(t) 
= 
\left( \sum_{t'=0}^{t-1} G^t \right) \textbf{m} + G^t \textbf{r}(0) 
\xrightarrow{t \rightarrow \infty} 
(1 - \textbf{G})^{-1} \textbf{m} 
\,\,\, . 
\end{equation} 

\end{proof}

        \subsection{The Pool Game} 

In the pool game pools try to optimize their infiltration rates of other pools to maximize their revenue. The overall number of miners and the number of miners loyal to each pool remain constant throughout the game. 
 
Time progresses in rounds. Let $s$ be a constant integer large enough that revenue can be approximated as its convergence limit. 
In each round the system takes~$s$ steps and then a single pool, picked with a round-robin policy, may change its infiltration rates of all other pools. The total revenue of each step is normalized to $1/s$, so the revenue per round is one. 

            \subsubsection*{Pool Knowledge} 

The pool taking a step knows the rate of infiltrators attacking it (though not their identity) and the revenue rates of each of the other pools. 
This knowledge is required to optimize a pool's revenue, as we explain in Section~\ref{sec:general}. 

A pool can estimate the rate with which it is attacked by comparing the rates of partial and full proofs of work it receives from its miners, as explained in Section~\ref{sec:classicalBWA}. 
In order to estimate the attack rates against each of the other pools, a pool can use one of two methods. 
First, pools often publish this data to demonstrate their honesty to their miners~\cite{slush2014dashboard, ghash2014dashboard, discusfish2014dashboard}. 
Second, a pool can infiltrate each of the other pools with some nominal probing mining power and measure the revenue density directly by monitoring the probe's rewards from the pool. 

        \subsection{General Analysis} \label{sec:general} 

Recall that $m_i$ is the number of miners loyal to pool~$i$. 
and $\xij{i}{j}(t)$ is the number of miners used by pool~$i$ to infiltrate pool~$j$ at step~$t$. 

The mining rate of pool~$i$ is therefore the number of its loyal miners minus the miners it uses for infiltration. 
This absolute mining rate denoted is divided by the total mining rate in the system, namely the number of all miners that do not engage in block withholding.  
Denoted the direct mining rate at step~$t$ by 
\begin{equation} \label{eqn:RiFull} 
R_i 
\eqDef 
\frac{ 
    m_i - \sum_{j = 1}^p \xij{i}{j} 
}{ 
    m - \sum_{j = 1}^p \sum_{k = 1}^p \xij{j}{k} 
} 
\end{equation}

% \[ 
% \tilde{R}_i(t) 
% \eqDef 
% m_i - \frac{m_i}{m_i + \sum_{j = 1}^p x_j^i(t)} \sum_{j = 1}^p x_i^j(t)
% \] 
% \[
% R_i(t) 
% \eqDef 
% \frac{
%     \tilde{R}_i(t) 
% }{
%     \sum_{j=1}^p \tilde{R}_j(t) + \left( m - \sum_{j=1}^p m_j \right)
% } \,\,\, .
% \] 

The revenue density of pool $i$ at the end of step~$t$ is its revenue from direct mining together with its revenue from infiltrated pools, divided by the number of its loyal miners together with block-withholding infiltrators that attack it: 
\begin{equation} \label{eqn:riFull}
r_i(t) 
= 
\frac{
R_i(t) + \sum_{j = 1}^{p} \xij{i}{j}(t) r_j(t)
}{
m_i + \sum_{j = 1}^{p} \xij{j}{i}(t) 
}
\,\,\, .
\end{equation} 

When pool~$i$ takes a step~$t$, it knows the revenue density of all other pools $r_j(t-1)$ and its total infiltration rate $\sum_{j = 1}^{p} \xij{j}{i}(t)$. 

            \subsection*{No attack} 

If no pool engages in block withholding, 
\[
\forall i, j: \xij{i}{j} = 0 \,\,\, , 
\] 
we have at all times 
\[
\forall i: r_i(t) = 1/m \,\,\, , 
\]
that is, each miner's revenue is proportional to its power, be it in a pool or working solo. 

%%%%%%%%%%%%%%%%%%%%%%%%%%%%%%%%%%%%%%%%%%%%%%%%%%%%%%%%%%%%%%%%%%%%%%%%%%%%%%%
%%%%%%%%%%%%%%%%%%%%%%%%%%%%%%%%%%%%%%%%%%%%%%%%%%%%%%%%%%%%%%%%%%%%%%%%%%%%%%%
%%%%%%%%%%%%%%%%%%%%%%%%%%%%%%%%%%%%%%%%%%%%%%%%%%%%%%%%%%%%%%%%%%%%%%%%%%%%%%%

    \section{One Attacker} \label{sec:twoPoolsOneAttacker}

We begin our analysis with a simplified game of two pools,~1 and~2, where pool~1 can infiltrate pool~2, but pool~2 cannot infiltrates pool~1. 
%
% Recall that we denote the total number of miners by $m$ and number of miners loyal to Pool~$i$ by $m_i$. 
% The rest of the $m- m_1 - m_2$ miners mine solo. 
The $m- m_1 - m_2$ miners outside both pools mine solo (or with closed pools that do not attack and cannot be attacked). 
This scenario is illustrated in Figure~\ref{fig:twoPoolsOneAttackerIllustration}. 
The dashed red arrow indicates that $\xij{1}{2}$ of pool~1's mining power infiltrates pool~2 with a block withholding attack. 

\begin{figure}[!t]
\centering
\includegraphics[width=\linewidth]{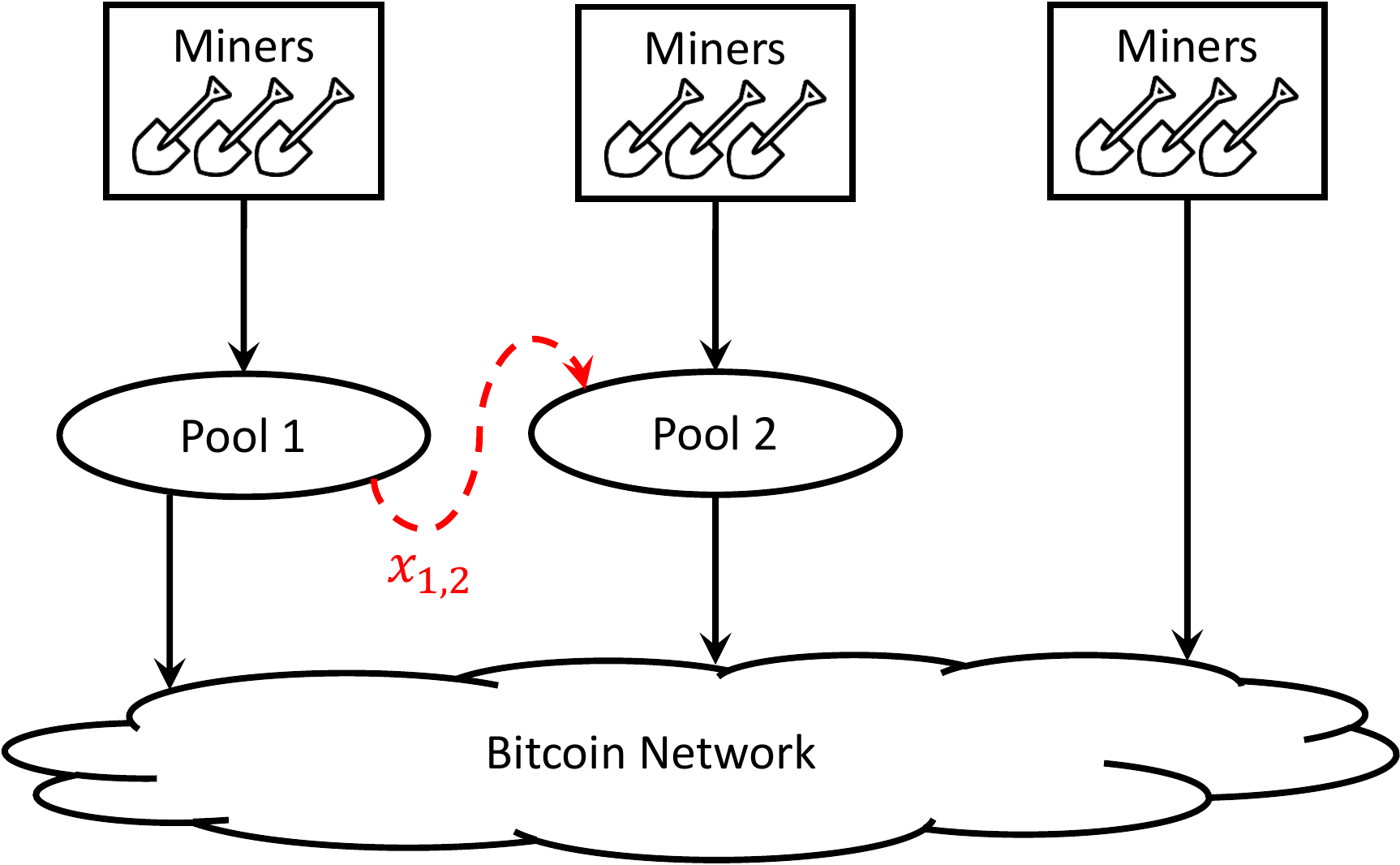}
\caption[.]{\protect 
The one-attacker scenario. Pool~1 attacks pool~2. 
} 
\label{fig:twoPoolsOneAttackerIllustration}
\end{figure}

Since Pool~2 does not engage in block withholding, all of its $m_2$ loyal miners work on its behalf. Pool~1, on the other hand does not employ $\xij{1}{2}$ of its loyal miners, and its direct mining power is only $m_1 - \xij{1}{2}$. 
The Bitcoin system normalizes these rates by the total number of miners that publish full proofs, namely all miners but $\xij{1}{2}$. 
The pools' direct revenues are therefore 
\begin{equation} \label{eqn:oneAttacker:Rs} 
\begin{aligned} 
    R_1 &= \frac{m_1 - \xij{1}{2}}{m - \xij{1}{2}} \\ 
    R_2 &= \frac{m_2}{m - \xij{1}{2}} \,\,\, . 
\end{aligned}
\end{equation} 

Pool~2 divides its revenue among its loyal miners and the miners that infiltrated it. Its revenue density is therefore  
\begin{equation} \label{eqn:oneAttacker:r2}
r_2 = 
\frac{
R_2
}{
m_2 + \xij{1}{2}
}
%= \frac{m_2}{m - x} / (m_2 + x) 
\,\,\, . 
\end{equation} 

Pool~1 divides its revenue among its registered miners. The revenue includes both its direct mining revenue and the revenue its infiltrators obtained from pool~2, which is $r_2 \cdot \xij{1}{2}$. The revenue per loyal Pool~1 miner is therefore 
\begin{equation} \label{eqn:oneAttacker:r1} 
r_1 = 
\frac{R_1 + \xij{1}{2} \cdot r_2}{m_1} 
%= \frac{\frac{m_1 - x}{m - x} + x \frac{m_2}{m - x} / (m_2 + x)}{m_1} 
\,\,\, . 
\end{equation} 

We obtain the expression for~$r_1$ in Equation~\ref{eqn:oneAttacker:r1}
by substituting $r_2$ from Equation~\ref{eqn:oneAttacker:r2} and $R_1$ and $R_2$ from equation~\ref{eqn:oneAttacker:Rs}: 
\[ 
r_1 = 
\frac{
(\xij{1}{2})^2-m_1 (m_2+\xij{1}{2})
}{
m_1 (\xij{1}{2}-1) (m_2+\xij{1}{2})
} 
\] 

        \subsection{Game Progress} 

Pool~1 controls its infiltration rate of pool~2, namely $\xij{1}{2}$, and will choose the value that maximizes the \emph{revenue density} (per-miner revenue) $r_1$ on the first round of the pool game. 

The value of $r_1$ is maximized at a single point in the feasible range $0 \le \xij{1}{2} \le m_1$. 
Since pool~2 cannot not react to pool~1's attack, this point is the stable state of the system, and we denote the value of \xij{1}{2}\ there by  
$
\xijbar{1}{2} 
\eqDef 
\arg \max_{\xij{1}{2}} r_1 
\,\,\, , 
$
and the values of the corresponding revenues of the pools with $\bar{r}_1$ and $\bar{r}_2$. 

Substituting the stable value $\xij{1}{2}$ we obtain the revenues of the two pools; all are given in Figure~\ref{fig:oneAttackerEqns}. 

\begin{figure*}[t]
\begin{equation}
\begin{aligned} 
& \xijbar{1}{2} 
=  
\frac{
m_2 - m_1 m_2 + \sqrt{-m_2^2 (-1 + m_1 + m_1 m_2)}
}{
-1 + m_1 + m_2
}
\\
& \bar{r}_1 
= 
\frac{
m_1 + (2 + m_1) m_2 + 2 \sqrt{-m_2^2 (-1 + m_1 + m_1 m_2)}
}{
m_1 (1 + m_2)^2
}
\\ 
& \bar{r}_2 
= 
\frac{
    -m_2 (-1 + m_1 + m_2)^2
}{
    \left( m_2^2 + \sqrt{-m_2^2 (-1 + m_1 + m_1 m_2)} \right) 
    \left( 1 - m_1 (1 + m_2) + \sqrt{-m_2^2 (-1 + m_1 + m_1 m_2)} \right)
}
\end{aligned}
\end{equation} 
\caption{
Stable state where only pool~1 attacks pool~2. 
} 
\label{fig:oneAttackerEqns} 
\end{figure*}

        \subsection{Numerical Analysis} 

\begin{figure*}[t]
\centering
% \Large{\tCache\ Efficiency with Realistic Workloads}

\subfloat[\xij{1}{2}]{
\includegraphics[width=0.3\linewidth]{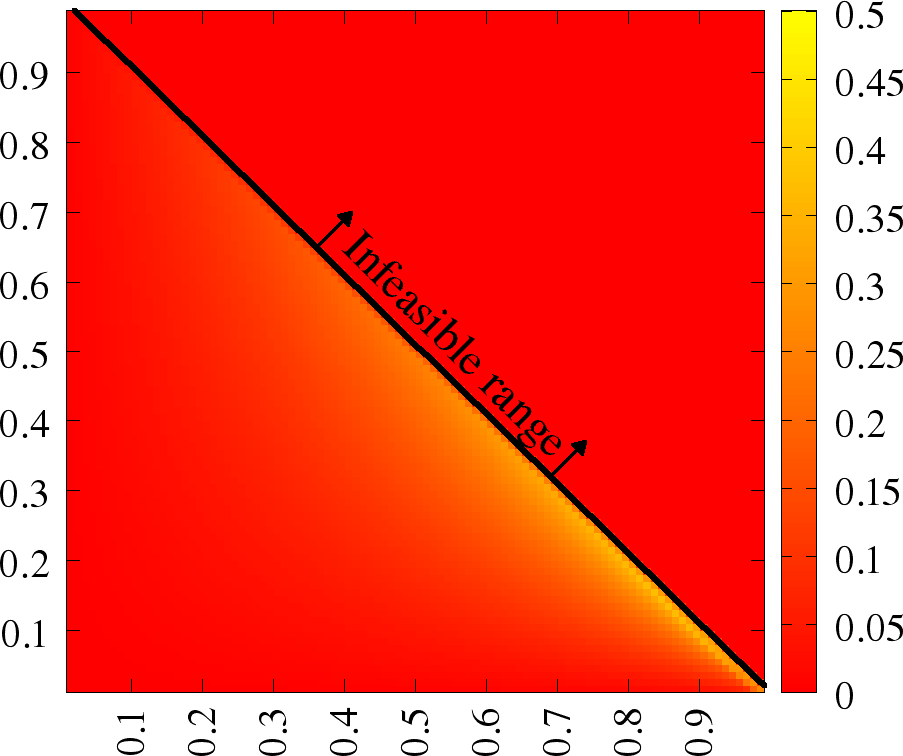}
\label{fig:twoPoolsOneAttacker:x12}
}
\hfil
\subfloat[$r_1$]{
\includegraphics[width=0.3\linewidth]{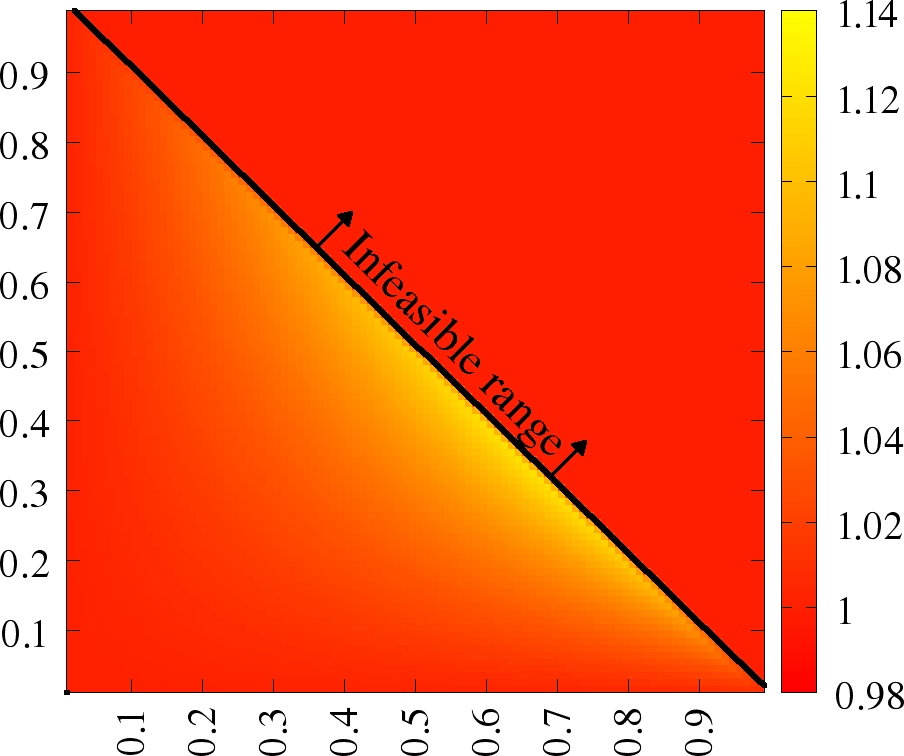}
\label{fig:twoPoolsOneAttacker:r1}
}
\hfil
\subfloat[$r_2$]{
\includegraphics[width=0.3\linewidth]{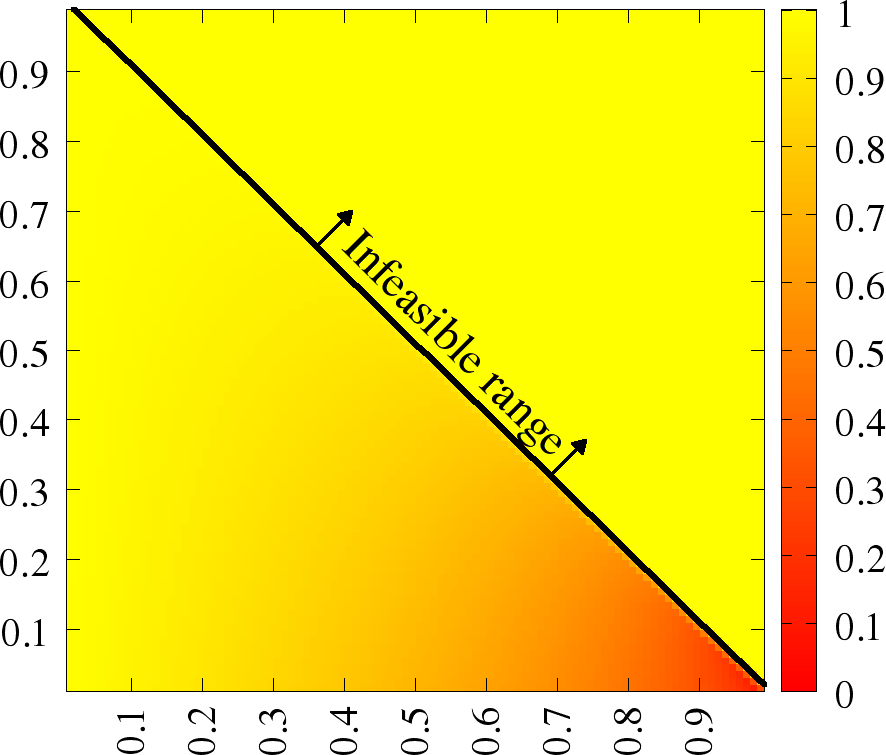}
\label{fig:twoPoolsOneAttacker:r2}
}

\caption[.]{\protect
Two pools where one infiltrates the other: Optimal infiltration rate $\xij{1}{2}$ and corresponding revenues~($r_1$ and $r_2$) as a function of pool sizes. The line in~\subref{fig:twoPoolsOneAttacker:x12} shows~$\xij{1}{2}=0$ and the lines in~\subref{fig:twoPoolsOneAttacker:r1} and~\subref{fig:twoPoolsOneAttacker:r2} show the revenue density of~1. 
} 
\label{fig:twoPoolsOneAttacker}
\end{figure*}

We analyze this game numerically by finding the $\xij{1}{2}$ that maximizes $r_1$ and substituting this value for $r_1$ and $r_2$. We vary the sizes of the pools through the entire possible range and depict the optimal $\xij{1}{2}$ and the corresponding revenues in Figure~\ref{fig:twoPoolsOneAttacker}. Each point in each graph therefore represents the equilibrium point of a game with the corresponding $m_1$ and $m_2$ sizes, where we normalize $m = 1$. The top right half of the range in all graphs is not feasible, as the sum of $m_1$ and $m_2$ is larger than~1. We use this range as a reference color, and we use a dashed line to show the bound between this value within the feasible range. 

Figure~\ref{fig:twoPoolsOneAttacker:x12} shows the optimal infiltration rate. In the entire feasible range we see that pool~1 chooses a strictly positive value for~$\xij{1}{2}$. Indeed, the revenue of pool~1 is depicted in Figure~\ref{fig:twoPoolsOneAttacker:r1} and in the entire feasible region it is strictly larger than~1, which the pool would have gotten without attacking ($\xij{1}{2} = 0$). Figure~\ref{fig:twoPoolsOneAttacker:r2} depicts the revenue of Pool~2, which is strictly smaller than~1 in the entire range. 

Note that the total system mining power is reduced when pool~1 chooses to infiltrate pool~2. Therefore, the revenue of third parties, miners not in either pool, increases from~$1/m$ to~$1/(m - \xij{1}{2})$. Pool~2 therefore pays for the increased revenue of its attacker and everyone else in the system. 

% The revenue $r_1$ is at its maximum at 
% \begin{equation} 
% \xij{1}{2}^{\max} = \frac{\sqrt{m_2^2 (m^2 - m m_1 - m_1 m_2)}-m m_2+m_1 m_2}{m-m_1-m_2} \,\,\, , 
% \end{equation}
% which respects $0 \le \xij{1}{2}^{\max} \le m_1$. The corresponding value of~$r_1$ is 
% \begin{equation} 
% r_1^{\max} = \frac{m (m_1+2 m_2)+m_1 m_2 - 2 \sqrt{m_2^2 \left(m^2-m m_1-m_1 m_2\right)}}{m_1 (m+m_2)^2} \,\,\, . 
% \end{equation} 

%         \subsection{Discussion} 
% 
% Our analysis demonstrates that a pool can always benefit from infiltrating another pool and performing a block withholding attack, sharing revenues among its loyal registrants. 
% 
% The possibility of a block withholding attack has been discussed for a long time, and in several occasions~\cite{occasion1, occasion2} pools had suspected they were being attacked. The assumed goal for such an attack was to reduce that attacked pool's revenue, possibly with the long term business goal of causing its miners to leave it in favor of the attacking pool, whose revenue remains unchanged. A recent work by courtois and Bahack~\cite{courtois2014subversive} noted that block withholding is profitable in a system of two pools (with no other miners). 
% 
% Our results demonstrate that in a system with any number of pools and independent miners that do not engage in a block withholding attack, it is beneficial for any pool to infiltrate any of its counterparts to increase its revenue. 

        \subsection{Implications to the general case} 

Consider the case of $p$ pools. For any choice of the pools sizes $m_1, \dots, m_p$, at least one pool will choose to perform block withholding: 
\begin{lemma} 
In a system with~$p$ pools, the point $\forall j, k: x_j^k = 0$ is not an equilibrium. 
\end{lemma}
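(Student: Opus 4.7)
The plan is to reduce the general $p$-pool case to the one-attacker analysis already carried out in Section~\ref{sec:twoPoolsOneAttacker}. Assume for contradiction that all infiltration rates are zero, i.e.\ every pool plays the no-attack strategy. To refute this being an equilibrium, I need to exhibit a single pool and a unilateral deviation that strictly increases its revenue density.

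First, I would fix $p \ge 2$ (the claim is vacuous for $p = 1$ since there is nothing to attack) and pick any two pools, say pool~1 and pool~2. Holding every other infiltration rate at zero, I consider pool~1 varying only $\xij{1}{2}$. Because no other pool infiltrates anyone and no one else infiltrates pool~1 or pool~2, the general revenue expressions of Equations~\eqref{eqn:RiFull} and~\eqref{eqn:riFull} collapse exactly into the one-attacker expressions of Equations~\eqref{eqn:oneAttacker:Rs}, \eqref{eqn:oneAttacker:r2}, \eqref{eqn:oneAttacker:r1}. Thus pool~1's revenue density $r_1$ as a function of its unilateral choice of $\xij{1}{2}$ coincides with the one-attacker formula.

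Second, I would invoke the conclusion of Section~\ref{sec:twoPoolsOneAttacker}: over the feasible range $0 \le \xij{1}{2} \le m_1$, the maximizing value $\xijbar{1}{2}$ is strictly positive and the maximum $\bar r_1$ strictly exceeds the no-attack baseline revenue density (cf.\ Figures~\ref{fig:twoPoolsOneAttacker:x12} and~\ref{fig:twoPoolsOneAttacker:r1}). In particular, pool~1 strictly prefers a small positive $\xij{1}{2}$ to $\xij{1}{2} = 0$, contradicting the equilibrium assumption.

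The main obstacle is making the strict positivity of $\xijbar{1}{2}$ airtight for arbitrary feasible pool sizes rather than appealing only to the numerical sweep. The cleanest way is to bypass the closed form and instead compute the one-sided derivative $\partial r_1 / \partial \xij{1}{2}$ at $\xij{1}{2} = 0$ using Equation~\eqref{eqn:oneAttacker:r1}: direct differentiation gives a sign that depends only on $m_1, m_2 \in (0,1)$ with $m_1 + m_2 \le 1$, and a short algebraic check shows it is strictly positive whenever both pools are nonempty. Hence for any $\epsilon > 0$ sufficiently small, setting $\xij{1}{2} = \epsilon$ yields $r_1 > 1/m$, ruling out the all-zeros profile as an equilibrium.
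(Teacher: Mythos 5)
Your proposal is correct and follows essentially the same route as the paper's own proof: assume the all-zeros profile is an equilibrium, freeze every other infiltration rate, observe that the remaining pools then behave exactly like solo miners so the setting collapses to the one-attacker analysis of Section~\ref{sec:twoPoolsOneAttacker}, and import the strict improvement $\bar r_1 > 1$ from there. Your added suggestion to verify $\partial r_1/\partial \xij{1}{2} > 0$ at $\xij{1}{2}=0$ analytically (one can check it equals~$1$ there for all feasible $m_1, m_2$) is a mild strengthening, since the paper's Section~\ref{sec:twoPoolsOneAttacker} supports the strict-positivity claim mainly by closed forms and a numerical sweep.
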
 

\begin{proof} 
Assume towards negation this is not the case, and $\forall j, k: x_j^k = 0$ is an equilibrium point. Now consider a setting with only pools~1 and~2, and treat the other pools as independent miners. 
This is the setting analyzed above and we have seen there that pool~1 can increase its revenue by performing a block withholding attack on pool~2. Denote pool~1's infiltration rate by~$\tilde{x}_1^2 > 0$. Now, take this values back to the setting at hand with~$p$ pools. The revenue of pool~1 is better when 
\[
x_1^2 = \tilde{x}_1^2, \forall (j, k) \neq (1, 2): x_j^k = 0 \,\,\, . 
\] 
Therefore, pool~1 can improve its revenue by attacking pool~2, and no-one-attacks is not an equilibrium point. 
\end{proof} 

~

    \section{Two Pools} \label{sec:twoPools}

\begin{figure*}[t]
\centering
% \Large{\tCache\ Efficiency with Realistic Workloads}

\subfloat[\xij{1}{2}]{
\includegraphics[width=0.4\linewidth]{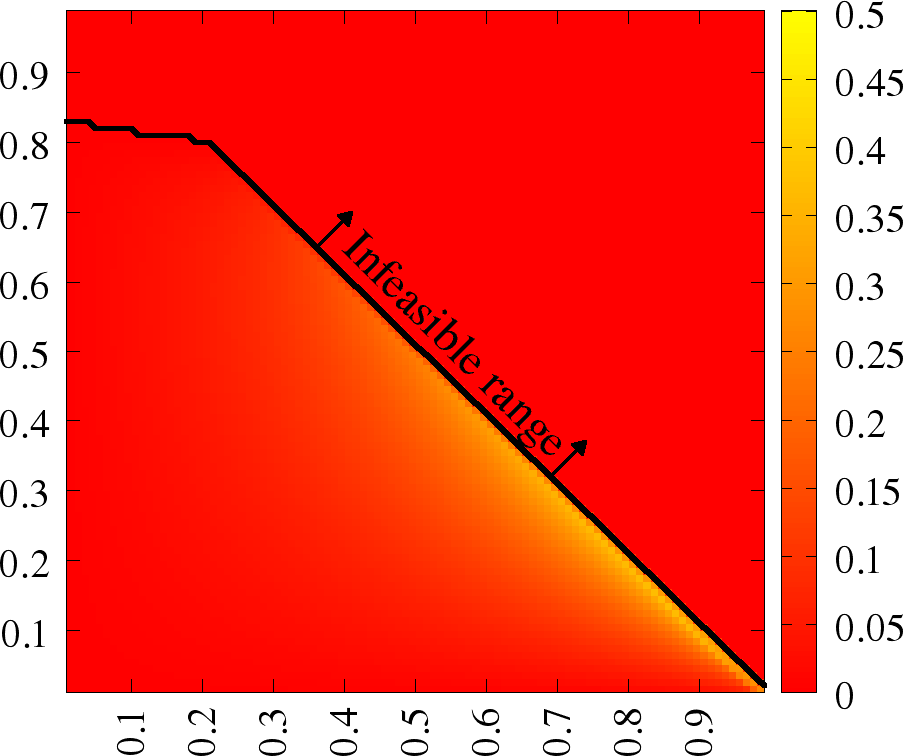}
\label{fig:twoPools:x1}
}
\hfil
\subfloat[\xij{2}{1}]{
\includegraphics[width=0.4\linewidth]{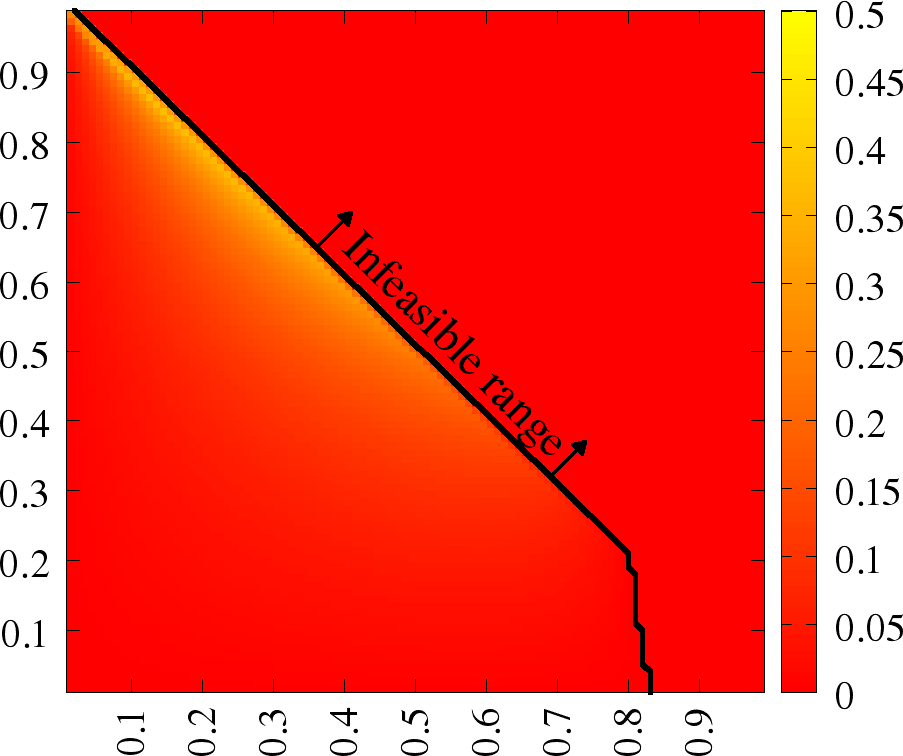}
\label{fig:twoPools:x2}
}

\subfloat[$r_1$]{
\includegraphics[width=0.4\linewidth]{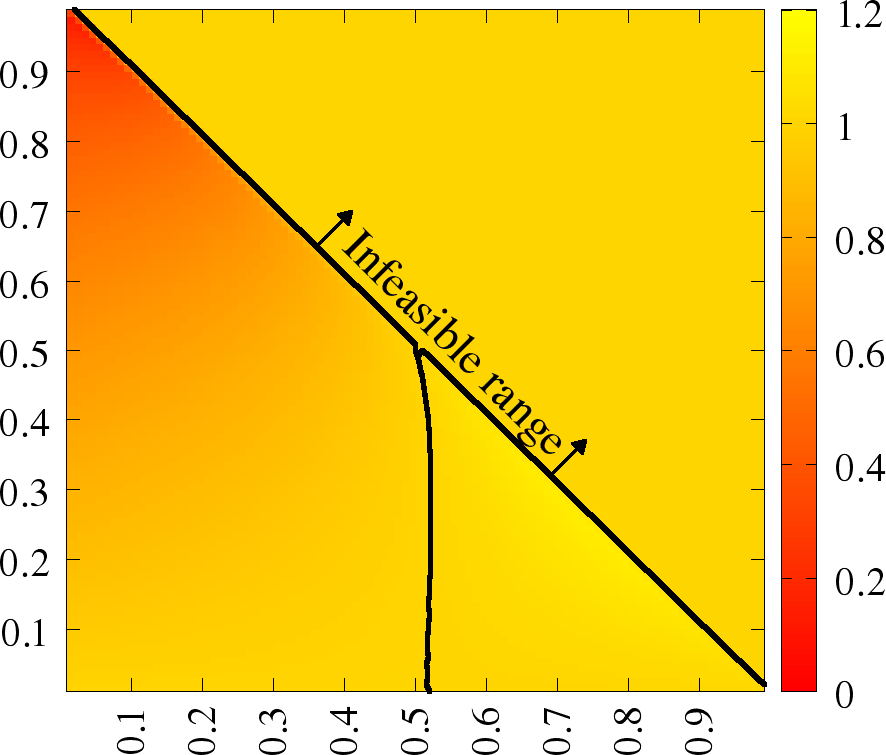}
\label{fig:twoPools:r1}
}
\hfil
\subfloat[$r_2$]{
\includegraphics[width=0.4\linewidth]{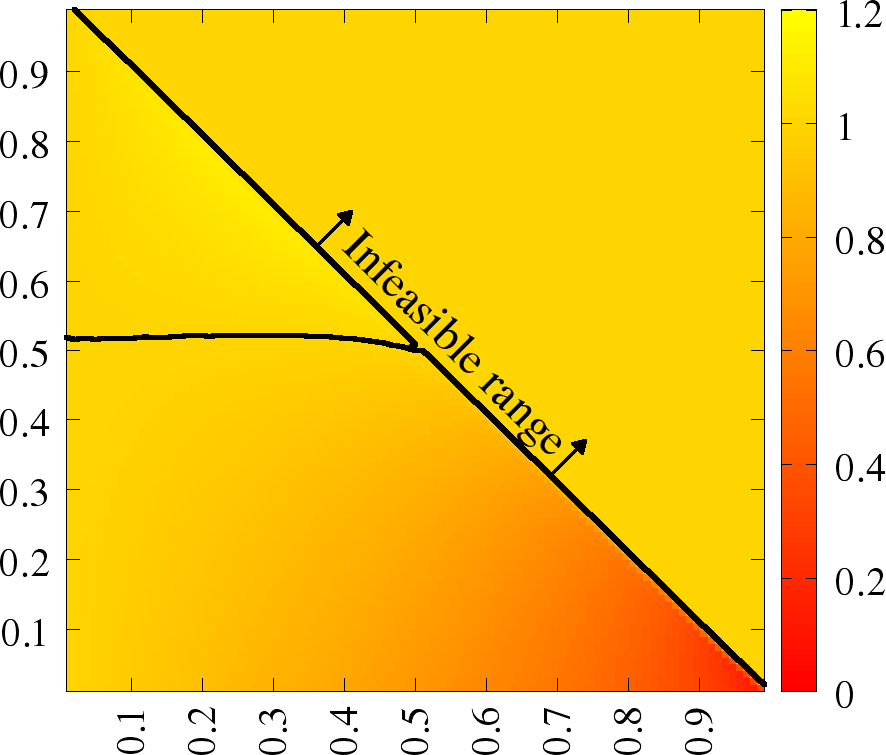}
\label{fig:twoPools:r2}
}

\caption[.]{\protect
Two attacking pools system: Optimal infiltration rates ($x_1$ and $x_2$) and corresponding revenues ($r_1$ and $r_2$) as a function of pool sizes. Lines in~\subref{fig:twoPools:x1} and~\subref{fig:twoPools:x2} are at $\xij{1}{2}=0$ and $\xij{2}{1}=0$, respectively. Lines in~\subref{fig:twoPools:r1} and~\subref{fig:twoPools:r2} are at $r_1=1$ and $r_2=1$, respectively. 
}
\label{fig:twoPools}
\end{figure*}

We proceed to analyze the case where two pools may attack each other and the other miners mine solo. Again we have pool~1 of size~$m_1$ and pool~2 of size~$m_2$; pool~1 controls its infiltration rate $\xij{1}{2}$ of pool~2, but now pool~2 also controls its infiltration rate~$\xij{2}{1}$ of pool~1. 
This scenario is illustrated in Figure~\ref{fig:twoPoolsIllustration} 

\begin{figure}[!t]
\centering
\includegraphics[width=\linewidth]{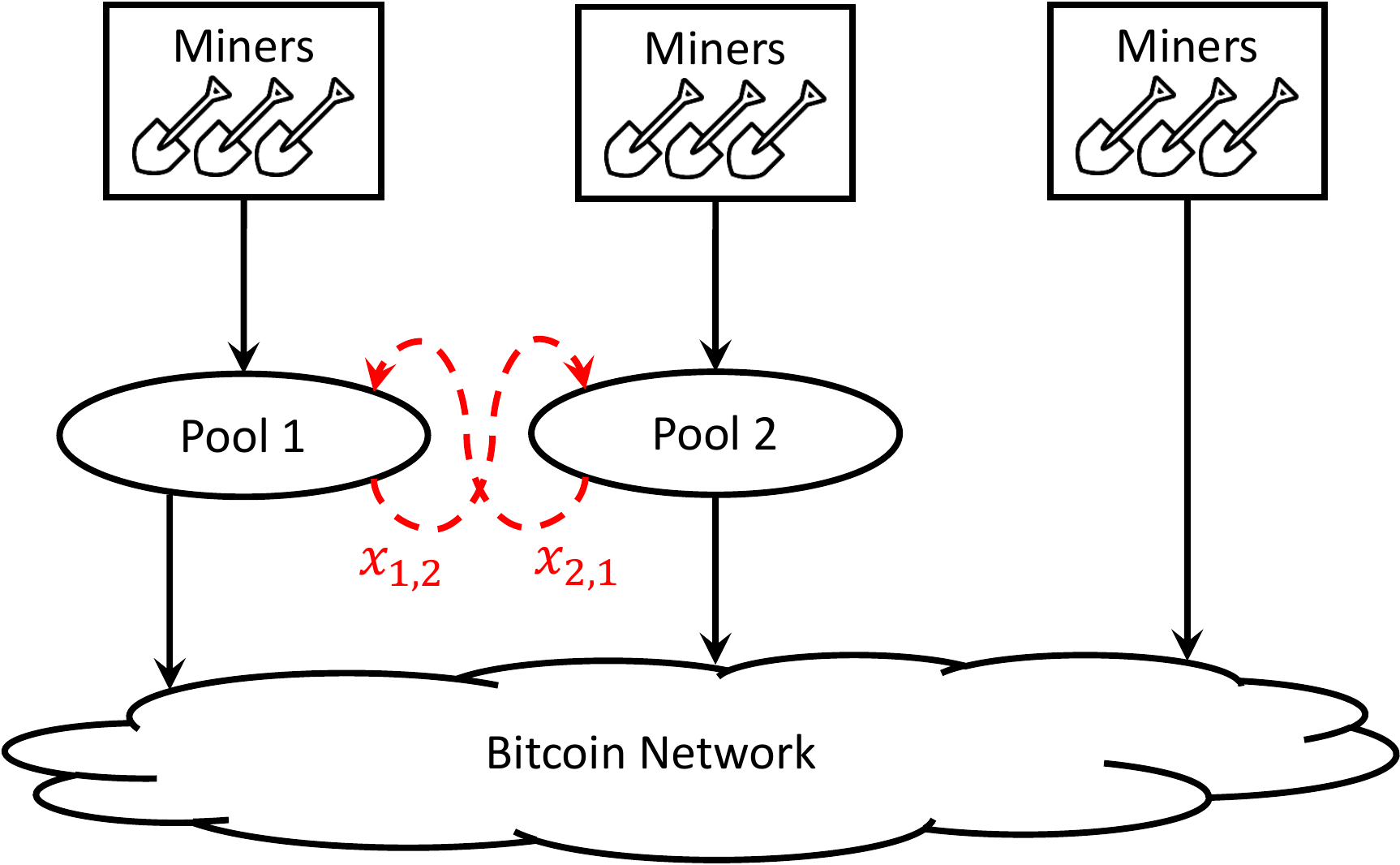}
\caption[.]{\protect 
Two pools attacking each other. 
} 
\label{fig:twoPoolsIllustration}
\end{figure}

The total mining power in the system is $m - \xij{1}{2} - \xij{2}{1}$
The direct revenues $R_1$ and $R_2$ of the pools from mining are their effective mining rates, without infiltrating mining power, divided by the total mining rate. 
\begin{equation} 
\begin{aligned} 
R_1 = \frac{m_1 - \xij{1}{2}}{m - \xij{1}{2} - \xij{2}{1}} &&\\ 
R_2 = \frac{m_2 - \xij{2}{1}}{m - \xij{1}{2} - \xij{2}{1}} && \,\,\, .  
\end{aligned} 
\end{equation} 

The total revenue of each pool is its direct mining revenue, above, and the infiltration revenue from the previous round, which is the attacked pool's total revenue multiplied by its infiltration rate. 
The pool's total revenue is divided among its loyal miners and miners that infiltrated it. At stable state this is 
\begin{equation} 
\begin{aligned} 
r_1 = \frac{R_1 + \xij{1}{2} r_2}{m_1 + \xij{2}{1}} && \\
r_2 = \frac{R_2 + \xij{2}{1} r_1}{m_2 + \xij{1}{2}} && \,\,\, .
\end{aligned}
\end{equation} 

Solving for $r_1$ and $r_2$ we obtain the following closed expressions for each. We express the revenues as functions of $\xij{1}{2}$ and $\xij{2}{1}$. 
\begin{equation} \label{eqn:twoAttackers:rs} 
\begin{aligned} 
r_1(\xij{1}{2}, \xij{2}{1}) 
= 
\frac{m_2 R_1+\xij{1}{2} (R_1+R_2)
}{
m_1 m_2 + m_1 \xij{1}{2} + m_2 \xij{2}{1}
} && 
\\
r_2(\xij{2}{1}, \xij{1}{2}) 
= 
\frac{
m_1 R_2+\xij{2}{1} (R_1+R_2)
}{
m_1 m_2 + m_1 \xij{1}{2} + m_2 \xij{2}{1}
} && 
\,\,\, .  
\end{aligned} 
\end{equation}

Each pool controls only its own infiltration rate. In each round of the pool game, each pool will optimize its infiltration rate of the other. If pool~1 acts at step~$t$, it optimizes its revenue with  
\begin{equation} \label{eqn:twoAttackers:x12}
\xij{1}{2}(t) \gets \arg\max_{x'} r_1(x', \xij{2}{1}(t-1)) \,\,\, , 
\end{equation} 
and if pool~2 acts at step~$t$, it optimizes its revenue with 
\begin{equation} \label{eqn:twoAttackers:x21}
\xij{2}{1}(t) \gets \arg\max_{x'} r_2(x', \xij{1}{2}(t-1)) \,\,\, .  
\end{equation} 

An equilibrium exists where neither pool~1 nor pool~2 can improve its revenue by changing its infiltration rate. That is, any pair of values $x_1', x_2'$ such that 
\begin{equation} \label{eqn:twoPoolsProblem}
\left\{
\begin{array}{l}
\arg\max_{\xij{1}{2}} r_1(\xij{1}{2}, \xij{2}{1}') = \xij{1}{2}' \\ 
\arg\max_{\xij{2}{1}} r_2(\xij{1}{2}', \xij{2}{1}) = \xij{2}{1}' 
\end{array}
\right.
\end{equation} 
under the constraints 
\begin{equation} \label{eqn:twoPoolsConstraints} 
\begin{aligned} 
0 < x_1 < m_1 && \\
0 < x_2 < m_2 && \,\,\, . 
\end{aligned}
\end{equation} 

The feasible region for the pool sizes is $m_1 > 0, m_2 > 0$, and $m_1 + m_2 \le m$. 
The revenue function for $r_i$ is concave in $x_i$ ($\partial^2 r_i / \partial x_i^2 < 0$) for all feasible values of the variables. Therefore the solutions for equations~\ref{eqn:twoAttackers:x12} and~\ref{eqn:twoAttackers:x21} are unique and are either at the borders of the feasible region or where $\partial r_i / \partial \xij{i}{j} = 0$. 

From Section~\ref{sec:twoPoolsOneAttacker} we know that no-attack is not an equilibrium point, since each pool can increase its revenue by choosing a strictly positive infiltration rate, that is, $\xij{1}{2} = \xij{2}{1} = 0$ is not a solution to Equations~\ref{eqn:twoPoolsProblem}--\ref{eqn:twoPoolsConstraints}. 

Nash equilibrium therefore exists with \xij{1}{2}, \xij{2}{1} values where 
\begin{equation} \label{eqn:twoPoolsNE}
\left\{
\begin{array}{l} 
\displaystyle
\frac{\partial r_1(\xij{1}{2}, \xij{2}{1})}{\partial{\xij{1}{2}}} = 0 \\ 
\displaystyle
\frac{\partial r_2(\xij{2}{1}, \xij{1}{2})}{\partial{\xij{2}{1}}} = 0 
\end{array} 
\right.
\,\,\, . 
\end{equation}

Using symbolic computation tools, we see that there is a single pair of values for which Equation~\ref{eqn:twoPoolsNE} holds for a choice of $m_1$ and $m_2$. 

        \subsection{Numerical Analysis} 

A numerical analysis confirms these observations. We simulate the pool game for a range of pool sizes. 
For each choice of pool sizes, we start the simulation when both pools do not infiltrate each other, $\xij{1}{2} = \xij{2}{1} = 0$, and the revenue densities are $r_1 = r_2 = 1$. 
At each round one pool chooses its optimal infiltration rate based on the pool sizes and the rate with which it is infiltrated, and we calculate the revenue after convergence with Equation~\ref{eqn:twoAttackers:rs}. Recall the players in the pool game are chosen with the Round Robin policy, so the pools take turns, and we let the game run until convergence. The results are illustrated in Figure~\ref{fig:twoPools}. 

Each run with a some $m_1, m_2$ values results in a single point in each graph in Figure~\ref{fig:twoPools}. We depict the infiltration rates of both pools $\xij{1}{2}, \xij{2}{1}$ in Figures~\ref{fig:twoPools:x1}--\ref{fig:twoPools:x2} and the pools' revenue densities~$r_1, r_2$ in Figures~\ref{fig:twoPools:r1}--\ref{fig:twoPools:r2}. 
So for each choice of $m_1$ and $m_2$, the values of $\xij{1}{2}$, $\xij{2}{1}$, $m_1$ and $m_2$ are the points in each of the graphs with the respective coordinates. 

As before, for the $\xij{i}{j}$ graphs we draw a border around the region where there is no-attack by $i$ in equilibrium. For the $r_i$ graphs we draw a line around the region where the revenue is the same as in the no-attack scenario, namely~1. 

We first observe that only in extreme cases a pool does not attack its counterpart. 
Specifically, at equilibrium a pool will refrain from attacking only if the other pool is larger than about $80\%$ of the total mining power. 

But, more importantly, we observe that a pool improves its revenue compared to the no-pool-attacks scenario only when it controls a strict majority of the total mining power. 
These are the small triangular regions in Figures~\ref{fig:twoPools:r1} and~\ref{fig:twoPools:r2}. 

        \subsection{The Prisoner's Dilemma} 

% \begin{SCtable*}
\begin{figure*}[t]
\begin{tabular}{|l|c|c|}
\hline
\backslashbox{Pool 2\kern-0.5em}{Pool 1} & no attack & attack \\
\hline
no attack & $(r_1 = 1, r_2 = 1)$ & $(r_1 > 1, r_2 = \tilde{r}_2 < 1)$ \\
\hline
attack & $(r_1 = \tilde{r}_1 < 1, r_2 > 1)$ & $(\tilde{r}_1 < r_1 <1 , \tilde{r}_2 < r_2 < 1)$ \\
\hline
\end{tabular}
\caption{Prisoner's Dilemma for two pools. The revenue density of each pool is determined by the decision of both pools whether to attack or not. 
The dominant strategy of each player is to attack, however the payoff of both would be larger if they both refrain from attacking.} 
\label{tbl:prisoners}
% \end{SCtable*} 
\end{figure*} 

In a stable Bitcoin environment with two pools, where neither controls a strict majority of the mining power, both pools will earn less at equilibrium than if both pools ran without attacking. 
We can analyze in this case a game where each pool chooses either to attack, or not to attack. If it attacks, the pool optimizes its revenue. 

Consider pool~1 without loss of generality. 
As we have seen in Section~\ref{sec:twoPoolsOneAttacker}, if pool~2 does not attack, pool~1 can increase its revenue above~1 by attacking, setting~$\xij{1}{2}$ to the optimum. 
If pool~2 does attack but pool~1 does not, we denote the revenue of pool~1 by $\tilde{r}_1$. The exact value of $\tilde{r}_1$ depends on the values of~$m_1$ and~$m_2$, but it is always smaller than one. As we have seen above, if pool~1 does choose to attack, its revenue increases, but does not surpass one. The game is summarized in Figure~\ref{tbl:prisoners}. 

When played once, this is the classical prisoner's dilemma. Attack is the dominant strategy: Whether pool~2 chooses to attack or not, the revenue of pool~1 is larger when attacking than when refraining from attack, and the same for pool~2. 
At equilibrium of this attack-or-don't game, when both pools attack, the revenue of each pool is smaller than its revenue if neither pool attacked. 

However, the game is not played once, but rather continuously, forming a super-game, where each pool can change its strategy between attack and no-attack. 
If the pools agree (even implicitly) to coordinate, in each round a pool can detect whether it is being attacked and deduce that the other pool is violating the agreement. 
In this super-game, cooperation where neither pool attacks is a possible stable state~\cite{friedman1971nonCooperative,aumann1994longTerm} despite the fact that the single Nash equilibrium in every round is to attack. 

%%%%%%%%%%%%%%%%%%%%%%%%%%%%%%%%%%%%%%%%%%%%%%%%%%%%%%%%%%%%%%%%%%%%%%%%%%%%%%% 
%%%%%%%%%%%%%%%%%%%%%%%%%%%%%%%%%%%%%%%%%%%%%%%%%%%%%%%%%%%%%%%%%%%%%%%%%%%%%%% 
%%%%%%%%%%%%%%%%%%%%%%%%%%%%%%%%%%%%%%%%%%%%%%%%%%%%%%%%%%%%%%%%%%%%%%%%%%%%%%% 

    \section{\texorpdfstring{$p$}{p} Identical Pools} \label{sec:pPools}

Let there be any number of pools of identical size that engage in block withholding against one another. 
In this case there exists a symmetric equilibrium. 
% In a symmetric equilibrium pools attack one another with identical rates. 
Consider, without loss of generality, a step of pool~1. 
It controls its attack rates each of the other pools, and due to symmetry they are all the same. 
Denote by $\xij{1}{\lnot 1}$ the attack rate of pool~1 against any other pool. 
Each of the other pools cab attack its peers as well. 
Due to symmetry, all attack rates by all attackers are identical. 
Denote by $\xij{\lnot 1}{*}$ the attack rate of any pool other than~1 against any other pool, including pool~1. 

Denote by $R_1$ the direct revenue (from mining) of pool~1 and by $R_{\lnot 1}$ the direct revenue of each of the other pools. Similarly denote by $r_1$ and $r_{\lnot 1}$ the revenue densities of pool~1 and other pools, respectively. 

The generic equations~\ref{eqn:RiFull} and~\ref{eqn:riFull} are instantiated to 
\begin{equation} \label{eqn:RsSymm} 
\begin{aligned} 
&R_1 
= 
\frac{
    m_i - (p-1) \xij{1}{\lnot 1}
}{
    m - (p - 1) (p - 1) \xij{\lnot 1}{*} - (p - 1) \xij{1}{\lnot 1}
} 
\\
&R_{\lnot 1} 
= 
\frac{ 
    m_i - (p - 1) \xij{\lnot 1}{*} 
}{ 
    m - (p - 1) (p - 1) \xij{\lnot 1}{*} - (p - 1) \xij{1}{\lnot 1} 
} 
\end{aligned}
\end{equation}
and  
\begin{equation} \label{eqn:rsSymm}
\begin{aligned} 
&r_1 
= 
\frac{
R_1 + (p - 1) \xij{1}{\lnot 1} r_{\lnot 1} 
}{ 
m_i + (p - 1) \xij{\lnot 1}{1} 
} 
\\
&r_{\lnot 1}  
= 
\frac{
R_{\lnot 1} + (p - 2) \xij{\lnot 1}{*} r_{\lnot 1} + \xij{\lnot 1}{*} r_1
}{ 
m_i + (p - 2) \xij{\lnot 1}{*} + \xij{1}{\lnot 1} 
} 
\end{aligned}
\,\,\, . 
\end{equation} 

Substituting Equations~\ref{eqn:RsSymm} in Equation~\ref{eqn:rsSymm} and solving we obtain a single expression for any $r_i$, since in the symmetric case we have $r_1 = r_{\lnot 1}$. The expression is shown in Equation~\ref{eqn:rsSymm} (Figure~\ref{fig:r1Symm}). 

\begin{figure*} 
\begin{equation} \label{eqn:r1Symm} 
r_i
=
-\frac{m_i^2+m_i \xij{1}{\lnot 1}-(p-1) \xij{1}{\lnot 1} ((p-1) \xij{\lnot 1}{*}+\xij{1}{\lnot 1})}{\left((p-1) \xij{1}{\lnot 1}+(p-1)^2 \xij{\lnot 1}{*}-1\right) ((m_i+\xij{1}{\lnot 1}) (m_i+(p-1) \xij{\lnot 1}{1})-(p-1) \xij{1}{\lnot 1} \xij{\lnot 1}{*})} 
\end{equation}
\caption{ 
Expression for $r_i$ in a system with pools of equal size. 
} 
\label{fig:r1Symm}
\end{figure*}

Given any value of $p$ and $m_i$ (where $p m_i < 1$), the feasible range of the infiltration rates is $0 \le \xij{i}{j} \le m_i / p$. Within this range $r_i$ is continuous, differentiable, and concave in $\xij{1}{\lnot 1}$. 
Therefore, the optimal point for pool~1 is where $\partial r_1 / \partial \xij{1}{\lnot 1} = \nobreak 0$. 
Since the function is concave the equation yields a single feasible solution, which is a function of the attack rates of the other pools, namely $\xij{\lnot 1}{1}$ and $\xij{\lnot 1}{*}$. 

To find a symmetric equilibrium, we equate $\xij{1}{\lnot 1} = \xij{\lnot 1}{1} = \xij{\lnot 1}{*}$ and obtain a single feasible solution. The equilibrium infiltration rate and the matching revenues are shown in Equation~\ref{eqn:ppoolsStableX} (Figure~\ref{fig:ppoolsStableX}). 

\begin{figure*} 
\begin{equation} \label{eqn:ppoolsStableX}
\begin{aligned} 
&\xijbar{1}{\lnot 1} 
= 
\xijbar{\lnot 1}{1} 
= 
\xijbar{\lnot 1}{*}
=
\frac{
p - m_i - \sqrt{(m_i - p)^2 - 4 (m_i)^2 (p - 1)^2 p}
}{
2 (p - 1)^2 p)
}
\\ 
&\bar{r}_1 
= 
\bar{r}_{\lnot 1} 
=
\frac{
    2 p
}{
    p - m_i + 2 m_i p + \sqrt{(m_i - p)^2 - 4 (m_i)^2 (p - 1)^2 p}
}
\end{aligned}
\end{equation}
\caption{
Symmetric equilibrium values for a system of $p$~pools of equal sizes. 
}
\label{fig:ppoolsStableX} 
\end{figure*} 

As in the two-pool scenario, the revenue at the symmetric equilibrium is inferior to the no-one-attacks non-equilibrium strategy.

    \section{Discussion} \label{sec:discussion} 

        \subsection{Bitcoin's Health} \label{sec:implications} 
    
Large pools hinder Bitcoin's distributed nature as they put a lot of mining power in the hands of a few pool managers. This has been mostly addressed by community pressure on miners to avoid forming large pools~\cite{andresen2014centralized}. However such recommendations had only had limited success, and mining is still dominated by a small number of large pools. 
As a characteristic example, in the period of November 2--8, 2014, three pools generated over $50\%$ of the proofs of work~\cite{organofcorti2014poolStats}. 

Long term block withholding attacks are difficult to hide, since miners using an attacked pool would notice the reduced revenue density. 
Nevertheless, such attacks are rarely reported\footnote{A recent example is an attack that was partially subverted due to limited efforts of the attacker to hide itself and an alert pool manager~\cite{wizkid2013eligius}. It is unknown whether this was a classical block withholding attack or a more elaborate scheme.}, and we can therefore conclude that they are indeed rare. 
The fact that such attacks do not persist may indicate that the active pools have reached an implicit or explicit agreement not to attack one another. 

However, an attacked pool cannot detect which of its miners are attacking it, let alone which pool controls the miners. 
At some point a pool might miscalculate and decide to try and increase its revenue. 
One pool might be enough to break the agreement, possibly leading to a constant rate of attacks among pools and a reduced revenue. 

If open pools reach a state where their revenue density is reduced due to attacks, miners will leave them in favor of other available options: Miners of sufficient size can mine solo; smaller miners can form private pools with closed access, limited to trusted participants. 

Such a change in the mining forces may be in favor of Bitcoin as a whole. Since they require such intimate trust, we believe private pools are likely to be smaller, and lead to a fine grained distribution of mining power with many small pools and solo miners. 

        \subsection{Miners and Pools} 

            \subsubsection{Direct Pool Competition} 

Since miners evidently prefer to work with public pools rather than solo~\cite{organofcorti2014poolStats}, a pool may engage in an attack against another pool not to increase its absolute revenue, but rather to attract miners by temporarily increasing its revenue compared to a competing pool. 

Our analysis addressed the eventual revenue of pools under block withholding attacks, after the Bitcoin system has normalized the revenues by adjusting difficulty. 
Before this normalization, the revenue of an attacking pool is reduced due to the reduction in revenue of both the attacking and attacked pools. 
Nevertheless, the attacker's revenue density compared to the victim's revenue density is immediately improved. 

This is an an enhanced version of the classical sabotage block withholding with a lower overhead for the attacker. Eventually, once difficulty is adjusted, the attacker may see an absolute benefit in attacking. 

The pool game model does not cover the dynamic interplay of pools and miners, which we leave for future work. 

            \subsubsection{Pool Fees} 

We assumed in our analysis that pools do not charge fees from their members since such fees are typically nominal ($0$ -- $3\%$ of a pool's revenue~\cite{btcWiki2014pools}). 
The model can be extended to include pools fees. 
Fees would add a friction element to the flow of revenue among infiltrated and infiltrating pools. 
Specifically, Equation~\ref{eqn:riFull} would change to take into account a pool fee of $f$ 
\begin{equation} \label{eqn:riFullWithFee}
r_i(t) 
= 
\frac{
R_i(t) + \sum_{j = 1}^{p} \xij{i}{j}(t) (1 - f) r_j(t)
}{
m_i + \sum_{j = 1}^{p} \xij{j}{i}(t) 
}
\,\,\, . 
\end{equation} 

A pool with a fee of $f$ is a less attractive target for block withholding, since the attacker's revenue is reduced by $f$.
However it is also less attractive for miners in general. 
Trading off the two for best protection is left for future work, as part of the treatment of the miner-pool interplay. 

%%%%%%%%%%%%%%%%%%%%%%%%%%%%%%%%%%%%%%%%%%%%%%%%%%%%%%%%%%%%%%%%%%%%%%%%%%%%%%%
%%%%%%%%%%%%%%%%%%%%%%%%%%%%%%%%%%%%%%%%%%%%%%%%%%%%%%%%%%%%%%%%%%%%%%%%%%%%%%%
%%%%%%%%%%%%%%%%%%%%%%%%%%%%%%%%%%%%%%%%%%%%%%%%%%%%%%%%%%%%%%%%%%%%%%%%%%%%%%%

    \section{Related Work} \label{sec:related} 

        \subsection{The Block Withholding Attack} 

The danger of a block withholding attack is as old as Bitcoin pools. 
The attack was described by Rosenfeld~\cite{rosenfeld2011analysis} as early as~2011, as pools were becoming a dominant player in the Bitcoin world. 
The paper described the standard attack, used by a miner to sabotage a pool at the cost of reducing its own revenue. Early work did not address the possibility of pools infiltrating other pools for block withholding. 

Courtois and Bahack~\cite{courtois2014subversive} have recently noted that a pool can increase its overall revenue with block withholding if all other mining is performed by honest pools. 
We consider the general case where not all mining is performed through public pools, and analyze situations where pools can attack one another. 
The discrepancy between the calculations of~\cite{courtois2014subversive} and our results for the special case analyzed there can be explained by the strong approximations in that work. 
For example, we calculate exactly how infiltrating miners reduce the revenue density of the infiltrated pool. 

        \subsection{Temporary Block Withholding} 

In the Block withholding attack discussed in this work the withheld blocks are never published. 
However, blocks can be withheld temporarily, not following the Bitcoin protocol, to improve an attacker's revenue. 

An attacker can perform a double spending attack as follows~\cite{rosenfeld2011analysis}. He intentionally generates two conflicting transactions, places one in a block it withholds, and publishes the other transaction. 
After the recipient sees the published transaction, the attacker publishes the withheld block to revoke the former transaction. This attack is performed by miners or pools against service providers that accept Bitcoin, and it unrelated to this work. 

A miner or a pool can perform a selfish mining attack. Here, the attacker increases its revenue by temporarily withholding its blocks and publishing them in response to block publication by other pools and miners~\cite{eyal2013majority}. This attack is independent of the block withholding attack we discuss here and the two can be performed concurrently. 

        \subsection{Block Withholding Defense} 

Most crypto-currencies use a proof-of-work architecture similar to Bitcoin, where finding proof of work is the result of solution guessing and checking. All of the algorithms we are aware of are susceptible to the block withholding attack, as in all of them the miner can check whether she found a full solution or a partial proof of work. 
Prominent examples are Litecoin~\cite{litecoin2013site}, Dogecoin~\cite{dogecoin2013site} and Permacoin~\cite{miller2014permacoin}. 

Rosenfeld~\cite{rosenfeld2011analysis} suggested a change of the block structure that would allow a pool to probe for block withholding with a honey-pot technique. A pool could generate miner tasks that it knows would lead to a (useless) block solution. An attacker would withhold the solution and expose itself. 

This fix, a different proof of work algorithm, or another solution, could reduce or remove the danger of block withholding. However, this may not be in the interest of the community: Pool block withholding, or even its potential, could lead to a reduction of pool sizes, as explained in Section~\ref{sec:implications}. 

        \subsection{Decentralized Pools} 

Although most pools use a centralized manager, a prominent exception is P2Pool~-- a distributed pool architecture with no central manager~\cite{forrsetv2011p2pool}. 
But the question of whether a pool is run by a centralized manager or with a decentralized architecture is almost immaterial for the attack we describe. 
An open P2Pool group can be infiltrated and attacked, and the P2Pool code can be changed to support attacks against other pools. 

On the other hand, P2Pool can be used by groups of miners to easily form closed pools. These do not accept untrusted miners, and are therefore protected against block withholding. 

%%%%%%%%%%%%%%%%%%%%%%%%%%%%%%%%%%%%%%%%%%%%%%%%%%%%%%%%%%%%%%%%%%%%%%%%%%%%%%%
%%%%%%%%%%%%%%%%%%%%%%%%%%%%%%%%%%%%%%%%%%%%%%%%%%%%%%%%%%%%%%%%%%%%%%%%%%%%%%%
%%%%%%%%%%%%%%%%%%%%%%%%%%%%%%%%%%%%%%%%%%%%%%%%%%%%%%%%%%%%%%%%%%%%%%%%%%%%%%%

    \section{Conclusion} \label{sec:conclusion}

We explored a block withholding attack among Bitcoin mining pools~--- an attack that is possible in any similar system that rewards for proof of work. Such systems are gaining popularity, running most digital currencies and related services. 

We observe that no-pool-attacks is not a Nash equilibrium: If none of the other pools attack, a pool can increase its revenue by attacking the others. 

When two pools can attack each other, they face a version of the Prisoner's Dilemma. If one pool chooses to attack, the victim's revenue is reduced, and it can retaliate by attacking and increase its revenue. However, when both attack, at Nash equilibrium both earn less than they would have if neither attacked. With multiple pools of equal size a similar situation arises with a symmetric equilibrium. 

The fact that block withholding is not common may be explained by modeling the attack decisions as an iterative prisoner's dilemma. However, we argue that since the attack can be done anonymously by any of the pools, this situation is unstable. Eventually one pool may decide to increase its revenue and drag the others to attack as well, ending with a reduced revenue for all. This would push miners to join private pools which can verify that their registered miners do not withhold blocks. This may lead to smaller pools, and so ultimately to a better environment for Bitcoin as a whole. 

\vspace{1ex}
\paragraph*{Acknowledgements} 
The author is grateful to 
Ken Birman, 
Fred B.\ Schneider, and
Eva Tardos for their valuable advice. 

%%%%%%%%%%%%%%%%%%%%%%%%%%%%%%%%%%%%%%%%%%%%%%%%%%%%%%%%%%%%%%%%%%%%%%%%%%%%%%%
%%%%%%%%%%%%%%%%%%%%%%%%%%%%%%%%%%%%%%%%%%%%%%%%%%%%%%%%%%%%%%%%%%%%%%%%%%%%%%%
%%%%%%%%%%%%%%%%%%%%%%%%%%%%%%%%%%%%%%%%%%%%%%%%%%%%%%%%%%%%%%%%%%%%%%%%%%%%%%%

\vfill\eject 

\bibliographystyle{plain} 
\bibliography{btc} 

%%%%%%%%%%%%%%%%%%%%%%%%%%%%%%%%%%%%%%%%%%%%%%%%%%%%%%%%%%%%%%%%%%%%%%%%%%%%%%%
%%%%%%%%%%%%%%%%%%%%%%%%%%%%%%%%%%%%%%%%%%%%%%%%%%%%%%%%%%%%%%%%%%%%%%%%%%%%%%%
%%%%%%%%%%%%%%%%%%%%%%%%%%%%%%%%%%%%%%%%%%%%%%%%%%%%%%%%%%%%%%%%%%%%%%%%%%%%%%%

\end{document}